 \newtheorem{theorem}{Theorem}[section]
 \newtheorem{lemma}{Lemma}[section]
 \newtheorem{definition}{Definition}[section]
\theoremstyle{definition}
\theoremstyle{plain}
\newcounter{algleo}[theorem]
\newlength{\lefttab}
\newlength{\numberoffset}
\begin{document}

\title{Minimum Label $s$-$t$ Cut has Large Integrality Gaps
\thanks{This paper is the full version of part of results presented in
the conference paper (extended abstract) \cite{TZ12} appeared in the Proceedings 
of the 10th Latin American Theoretical Informatics Symposium (LATIN).}}

\author{
Peng Zhang
  \thanks{Corresponding author. School of Software
    and School of Computer Science and Technology,
    Shandong University, Jinan, 250101, China.
    E-mail: {\tt algzhang@sdu.edu.cn}.}
\and
Linqing Tang
  \thanks{State Key Lab. of Computer Science, Institute of Software,
    Chinese Academy of Sciences, Beijing, 100190, China.
    E-mail: {\tt linqing@ios.ac.cn}.}
}

\maketitle

\begin{abstract}
Given a graph $G=(V,E)$ with a label set $L = \{\ell_1, \ell_2, \ldots,
\ell_q \}$, in which each edge has a label from $L$, a source $s \in V$,
and a sink $t \in V$, the {\sf Min Label $s$-$t$ Cut} problem
asks to pick a set $L' \subseteq L$ of labels with minimized cardinality,
such that the removal of all edges with labels in $L'$ from $G$ disconnects
$s$ and $t$. This problem comes from many applications in real world,
for example, information security and computer networks. In this paper,
we study two linear programs for {\sf Min Label $s$-$t$ Cut},
proving that both of them have large integrality gaps, namely,
$\Omega(m)$ and $\Omega(m^{1/3-\epsilon})$ for the respective linear programs,
where $m$ is the number of edges in the graph and $\epsilon > 0$
is any arbitrarily small constant.
As {\sf Min Label $s$-$t$ Cut} is NP-hard and the linear programming
technique is a main approach to design approximation algorithms,
our results give negative answer to the hope that designs better approximation
algorithms for {\sf Min Label $s$-$t$ Cut} that purely rely on linear
programming.
\end{abstract}

%

\section{Introduction}
The {\sf Min Label $s$-$t$ Cut} problem ({\sf Label $s$-$t$ Cut} for short)
is a fundamental problem in combinatorial optimization
which attracts much attention of researchers recently.

\begin{definition}
{\bf The {\sf Min Label $s$-$t$ Cut} problem.}

{\em Instance:} We are given a (directed or undirected) graph $G = (V, E)$,
a source $s \in V$, a sink $t \in V$, and a label set
$L = \{\ell_1, \ell_2, \cdots, \ell_q\}$. Each edge in graph $G$ has
a label from $L$.

{\em Goal:} A label subset $L' \subseteq L$ is called a {\em label $s$-$t$ cut},
if the removal of all edges with labels in $L'$ from $G$ disconnects
$s$ and $t$ (that is, disconnects all $s$-$t$ paths).
The goal of the problem is to find a minimum size label $s$-$t$ cut.
\end{definition}

The {\sf Label $s$-$t$ Cut} problem is quiet natural that it may appear in
many applications. For example, the researchers independently got this problem
from the study of system security \cite{JSW02,SHJ+02,SW04} and
and the study of computer networks \cite{CDP+07}. For completeness,
we give a brief introduction to the origins of the {\sf Label $s$-$t$ Cut} problem
in Appendix \ref{sec - origins of label s-t cut}.

The {\sf Min $s$-$t$ Cut} is one of the most fundamental problems in operations
research and computer science. Given a (directed or undirected) graph
and a vertex pair $(s, t)$, the problem asks to find an edge set with the minimum
size such that the removal of these edges from $G$ disconnects $s$ and $t$.
It can be easily seen that the {\sf Label $s$-$t$ Cut} problem
is in fact an edge-classified {\sf Min $s$-$t$ Cut} problem,
in which the edges are classified into groups (or types) according to their labels.
In the {\sf Label $s$-$t$ Cut} problem, we can remove a group of edges with the same
label by only paying a unit cost.
In other words, the {\sf Label $s$-$t$ Cut} problem is a natural generalization
of the classic {\sf Min $s$-$t$ Cut} problem, in the sense that
{\sf Min $s$-$t$ Cut} can be viewed as a special case of {\sf Label $s$-$t$ Cut}
in which each edge has a unique label. It is well-known that {\sf Min $s$-$t$ Cut}
can be solved in polynomial time (see, e.g., \cite[Chapter 7]{AMO93}).
However, {\sf Label $s$-$t$ Cut} is NP-hard and has very high approximation
hardness (see the related work in Section \ref{sec - related work}).

Besides the {\sf Label $s$-$t$ Cut} problem, there are still many classic
optimization problems that have been considered under the edge-classified model,
such as the {\sf Min Label Spanning Tree} problem \cite{CL97,KW98},
the {\sf Min Label $s$-$t$ Path} problem \cite{BLWZ05,HMS07},
the {\sf Min Label Traveling Salesman} problem \cite{CGMT10,XGW07},
the {\sf Min Label Perfect Matching} problem \cite{Mon05},
and the {\sf Min Label Steiner Tree} problem \cite{CMDM10}, etc.

\subsection{Related Work}
\label{sec - related work}
Jha et al. \cite{JSW02} proved that {\sf Label $s$-$t$ Cut} is NP-hard by reducing
the {\sf Hitting Set} problem to it. Coudert et al. \cite{CDP+07} proved that
the {\sf Label $s$-$t$ Cut} problem is NP-hard and APX-hard by reducing the {\sf MAX 3SAT}
problem to it. Zhang et al. \cite{ZCTZ11} gave the first non-trivial approximation
algorithm for the {\sf Label $s$-$t$ Cut} problem in general graphs
with approximation ratio $O(m^{1/2})$, where $m$ is the number of edges
in graph $G$.

Using a mixed strategy of LP-rounding and (any exact algorithm for) min cut,
In 2012, Tang et al. \cite{TZ12} gave
an $O(\frac{m^{1/2}}{OPT^{1/2}})$-approximation and
an $O(\frac{n^{2/3}}{OPT^{1/3}})$-approximation for {\sf Label $s$-$t$ Cut},
where $m$ is the edge number, $n$ is the vertex number, and $OPT$
is the optimal value. Note that $m$ would be $\Omega(n^2)$ in dense graphs,
implying that the two ratios $O(\frac{m^{1/2}}{OPT^{1/2}})$ and
$O(\frac{n^{2/3}}{OPT^{1/3}})$ are incomparable.
To the best of our knowledge, they are the best known
approximation ratios (in terms of $m$ and $n$, respectively)
for {\sf Label $s$-$t$ Cut}.
Later, Zhang et al. \cite{ZFT16} further refined the algorithms in \cite{TZ12}
to purely combinatorial approximation algorithms (i.e., not using LP-rounding)
for {\sf Label $s$-$t$ Cut} with the same approximation ratios as in \cite{TZ12}.

On the computational hardness side, Zhang et al. \cite{ZCTZ11} showed that
the {\sf Label $s$-$t$ Cut} problem can not be approximated within
$2^{(\log |\mathcal{I}|)^{1 - 1/(\log\log |\mathcal{I}|)^c}}$
for any constant $c < 1/2$ unless P $=$ NP, where $|\mathcal{I}|$ is
the input length of the problem. Note that this is a very high hardness factor.
Its order is higher than any polynomial logarithm (i.e., $\log^c n$ for any
constant $c > 0$), but lower than any polynomial (i.e., $n^\epsilon$ for any small
constant $\epsilon > 0$). Essentially the same hardness factor
was independently proved in \cite{CDP+07}.

Fellows et al. \cite{FGK10} considered the parameterized complexity of
the {\sf Label $s$-$t$ Cut} problem. They showed that even in graphs whose
path-width is bounded above by a small constant, the {\sf Label $s$-$t$ Cut}
problem is W[2]-hard when parameterized by the number of used labels.
Recall that W[2] is a class of the W-hierarchy in parameterized complexity.
By the parameterized complexity hypothesis, a problem which is W[$i$]-hard
($i \geq 1$) is not likely fixed-parameter tractable (that is, it is not likely
in FPT).

Jegelka et al. \cite{JB10,JB14} studied a more general cut problem called
{\sf Cooperative $s$-$t$ Cut}, which finds an $s$-$t$ cut such that
an objective function is minimized, where the objective function
can be arbitrary submodular function defined on the edge subsets.
It is not difficult to see that {\sf Cooperative $s$-$t$ Cut} is
a generalization of {\sf Label $s$-$t$ Cut}. Jegelka et al. \cite{JB10,JB14}
gave some approximation algorithms for the {\sf Cooperative $s$-$t$ Cut} problem.

\subsection{Our Results}
In this paper, we study the integrality gaps of two natural linear programming
relaxations for {\sf Label $s$-$t$ Cut}. See (\ref{LP1 - weak LP for label cut})
and (\ref{LP2 - strong LP for label cut}) in the paper.
We prove that both of the two LPs have large integrality gaps.
Let $m$ and $n$ be the edge number and vertex number of the input graph, respectively.
The main theorem of the paper is the following Theorem
\ref{th - integrality gap of LP2 in terms of m}.
\begin{theorem}
\label{th - integrality gap of LP2 in terms of m}
The integrality gap of the LP-relaxation (\ref{LP2 - strong LP for label cut})
is $\Omega(m^{1/3-\epsilon})$, where $\epsilon > 0$ is any small constant.
\end{theorem}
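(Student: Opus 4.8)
The plan is to exhibit an explicit family of labeled graphs on which the optimum of (\ref{LP2 - strong LP for label cut}) is small while the true minimum label $s$-$t$ cut is large, and then to balance parameters so that the ratio is $\Omega(m^{1/3-\epsilon})$. Recall that (\ref{LP2 - strong LP for label cut}) has a nonnegative variable $x_\ell$ for every label and, for every $s$-$t$ path $P$, the constraint $\sum_{\ell \in L(P)} x_\ell \ge 1$, where $L(P)$ is the \emph{set} of distinct labels occurring on $P$; the objective is $\min \sum_\ell x_\ell$. The two directions pull against each other: to keep the LP value small we want every $s$-$t$ path to meet many \emph{distinct} labels, so that tiny uniform weights already cover it; whereas to keep the integral optimum large we want the labeling to be \emph{cut-robust}, i.e.\ deleting any small set of labels should leave some $s$-$t$ path intact. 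The whole difficulty is to realize both properties simultaneously; as the symmetric ``one perfect matching per label'' labelings show, naive constructions make the fractional path-packing and the integral cut coincide and yield no gap at all.

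First I would fix the host graph to be a layered graph with $h$ layers of width $w$, consecutive layers joined by a dense (complete-bipartite or expander-like) connection, so that it carries $m=\Theta(hw^2)$ edges and exponentially many $s$-$t$ paths. Labels are drawn from $h$ disjoint palettes, one per layer, each of size $q/h$; the only structural rule imposed for free is that \emph{consecutive layers use disjoint palettes}, which forces every $s$-$t$ path to collect at least $h-1$ distinct labels. This gives the easy direction immediately: the uniform assignment $x_\ell \equiv 1/(h-1)$ is feasible for (\ref{LP2 - strong LP for label cut}), whence $OPT_{LP}\le q/(h-1)$. Within each layer I would then choose the labeling at random, each edge receiving an independent label from that layer's palette, the point of the randomness being to break the symmetry that otherwise collapses the gap and to make every small label set a poor cut.

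The heart of the proof, and the step I expect to be the main obstacle, is the lower bound $OPT\ge B$ for a suitable threshold $B$: I must show that after deleting \emph{any} set $S$ of fewer than $B$ labels, the surviving subgraph (all edges whose label avoids $S$) still connects $s$ to $t$. Because the graph has exponentially many paths, one cannot argue path by path; instead I would fix $S$, show that in each layer the surviving dense connection still admits an $s$-to-$t$ crossing with overwhelming probability over the random labeling, and then take a union bound over all $\binom{q}{<B}$ choices of $S$. The bound succeeds only when $\binom{q}{B}$ times the per-set failure probability is $o(1)$, and it is exactly this competition---between the palette size $q$, which controls the number of deletable sets, and the layerwise survival probability, which improves with the width $w$ and with the spread of each label---that caps how large $B$ may be taken.

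Finally I would optimize. With $m=\Theta(hw^2)$, $OPT_{LP}\le q/(h-1)$, and $OPT\ge B$, the gap is at least $B(h-1)/q$, and the three free dimensions $h$, $w$, $q$ (together with the union-bound budget $B$) are precisely what make the attainable exponent tend to $1/3$. Setting the parameters so that $q$, $h$, and $w$ are comparable powers of a common size parameter, and absorbing the polylogarithmic and union-bound slack into the arbitrarily small constant $\epsilon$, yields $B(h-1)/q=\Omega(m^{1/3-\epsilon})$, which is the claimed integrality gap for (\ref{LP2 - strong LP for label cut}). The only genuinely delicate point remains the cut-robustness estimate of the previous paragraph; everything else is either the explicit feasible fractional solution or bookkeeping of $m$ against $h,w,q$.
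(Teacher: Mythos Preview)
Your proposal has a fatal structural flaw: the construction you describe cannot exhibit an integrality gap larger than $1$. The culprit is the per-layer palette. If every edge of layer $i$ carries a label from a dedicated palette $P_i$ of size $q/h$, then deleting all of $P_i$ removes every edge of that layer and disconnects $s$ from $t$; hence $OPT\le q/h$ for \emph{every} realization of the random labeling, and in particular your union-bound threshold $B$ can never exceed $q/h$. Plugging this into your own gap estimate gives
\[
\frac{B\,(h-1)}{q}\;\le\;\frac{(q/h)(h-1)}{q}\;=\;\frac{h-1}{h}\;<\;1,
\]
so no choice of $h,w,q$ produces anything nontrivial, let alone $\Omega(m^{1/3-\epsilon})$. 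The step you flagged as ``the main obstacle'' is not merely delicate here; it is impossible, because the adversary always has an explicit label cut of size $q/h$ available.

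The missing idea is that labels must be \emph{shared across many parallel $s$-$t$ subgraphs}, not segregated along a single series composition. The paper's construction does exactly this: for every pair $\mu<\nu$ from a ground set $[k]$ it hangs a separate ``shutter'' $H_{\mu\nu}$ between $s$ and $t$, and the label $(\mu,j)$ reappears in \emph{every} shutter involving $\mu$. Consequently removing all $d$ labels $(\mu,\cdot)$ does not cut even one chain (one can still traverse the bottom of each diamond); cutting chain $C^i_{\mu\nu}$ requires some $j$ with both $(\mu,j)$ and $(\nu,\sigma^i_{\mu\nu}(j))$ in the deleted set, and the independent random permutations $\sigma^i_{\mu\nu}$ make this two-sided hit unlikely to hold simultaneously across all $h\binom{k}{2}$ chains. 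Your instinct to randomize the labeling and union-bound over candidate cuts is the right one, but it only succeeds on a host graph built as a \emph{parallel} composition of many label-entangled gadgets; on a layered series graph, one layer's palette is always a cheap cut and the argument collapses.
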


Specifically, we prove that (\ref{LP1 - weak LP for label cut}) has integrality
gap $\Omega(m)$, and (\ref{LP2 - strong LP for label cut}) has has integrality
gap $\Omega(m^{1/3-\epsilon})$ for any small constant $\epsilon > 0$.
Since the graphs we construct for these two results are connected
(implying $m = \Omega(n)$), these two results also imply that
(\ref{LP1 - weak LP for label cut}) has integrality gap $\Omega(n)$,
and (\ref{LP2 - strong LP for label cut}) has has integrality
gap $\Omega(n^{1/3-\epsilon})$ for any small constant $\epsilon > 0$.

Linear program (\ref{LP2 - strong LP for label cut}) is a more stronger version
than (\ref{LP1 - weak LP for label cut}). Our main result is about the integrality
gap of (\ref{LP2 - strong LP for label cut}).
This is proved by a probabilistic
method, that is, we show that with probability larger than zero,
a randomized {\sf Label $s$-$t$ Cut} instance has integrality gap
$\Omega(m^{1/3-\epsilon})$. Therefore, {\em there is} a specific
{\sf Label $s$-$t$ Cut} instance which has integrality gap
$\Omega(m^{1/3-\epsilon})$. Honestly speaking, the proof for this result
is rather complicated. For the sake of readability, we have to write it down
in several separated sections.

Let $\cal I$ be an instance for some minimization problem $\Pi$, and $OPT({\cal I})$
be its optimal value. Let $LP$ be a linear program relaxation for problem $\Pi$,
and $OPT_f(LP({\cal I}))$ be its optimal value on instance $\cal I$.
We use the subscript $f$ to emphasize that $OPT_f(LP({\cal I}))$ is
the {\em fractional} optimal value of $LP$ on instance $\cal I$.
Recall that for a minimized linear program relaxation such as $LP$,
its integrality gap is defined to be the supremum of the ratio between
$OPT({\cal I})$ and $OPT_f(LP({\cal I}))$ over all instances $\cal I$, i.e.,
the integrality gap is
\[
\sup_{\cal I} \left\{\frac{OPT({\cal I})}{OPT_f(LP({\cal I}))}\right\}.
\]

Linear programming is a powerful and successful technique to design approximation
algorithms for NP-hard problems. Some reasons are that,
linear program is in polynomial time solvable and
$OPT_f(LP({\cal I}))$ provides a natural lower bound on $OPT({\cal I})$,
facilitating the design and analysis of approximation algorithms.
On the other hand, from the definition of integrality gap we should learn that,
any approximation algorithm that only use $OPT_f(LP({\cal I}))$
as the lower bound on $OPT({\cal I})$, cannot admit a ratio better than
the integrality gap.

The meaning of our results is then clear: Our results provide
lower bound on the approximation ratios of any approximation algorithms
that are only based on (\ref{LP1 - weak LP for label cut}) or
(\ref{LP2 - strong LP for label cut}) (e.g., the LP-rounding approximation
algorithms and the primal-dual approximation algorithms).
For the {\sf Label $s$-$t$ Cut} problem, if an approximation algorithm only uses
$OPT_f(\ref{LP1 - weak LP for label cut})$ as the lower bound on $OPT$,
then it cannot has an approximation ratio better than $\Omega(m)$.
Similarly, if an approximation algorithm for {\sf Label $s$-$t$ Cut} only uses
$OPT_f(\ref{LP2 - strong LP for label cut})$ as the lower bound on $OPT$,
then it cannot has an approximation ratio better than $\Omega(m^{1/3-\epsilon})$.
These theoretical negative results suggest that to obtain better approximation
ratios for the {\sf Label $s$-$t$ Cut} problem, one should seek new algorithms
other than pure linear programming algorithms.

This paper is the full version of the integrality gap results
in the preliminary conference paper \cite{TZ12}.
A preliminary version of the integrality gap results and their sketch
proofs were given in \cite{TZ12} (in three and half pages).

Besides the integrality gap results, \cite{TZ12} also gave
an $O(\frac{m^{1/2}}{OPT^{1/2}})$ approximation and
an $O(\frac{n^{2/3}}{OPT^{1/3}})$-approximation for the {\sf Label $s$-$t$ Cut}
problem, using a two-stage strategy of LP-rounding and min cut.
After the conference paper \cite{TZ12} was published, we are able to simplify
the approximation algorithms in \cite{TZ12}, getting two purely combinatorial
(i.e., not using LP-rounding) approximation algorithms for {\sf Label $s$-$t$ Cut}
with the same approximation ratios. These algorithmic results are published
in a separate paper (\cite{ZFT16}).

\subsection{More Related Work}
A closely related problem to {\sf Label $s$-$t$ Cut} is the {\sf Min Global
Label Cut} problem ({\sf Global Label Cut} for short). Give an edge-labeled graph,
{\sf Global Label Cut} asks to find a minimum size label set such that
the removal of edges with these labels disconnects the input graph
(into at least two parts). It is easy to see that the {\sf Global Label Cut}
problem is a generalization of the classic {\sf Global Min Cut} problem \cite{KS96}
and the connectivity concept in graph theory.

Zhang et al. \cite{ZCTZ11} first proposed
the {\sf Global Label Cut} problem. They show that this problem can be approximated
within the same factor of {\sf Label $s$-$t$ Cut} by reducing {\sf Global Label Cut}
to {\sf Label $s$-$t$ Cut}. In \cite{ZF16}, Zhang et al. showed that
{\sf Global Label Cut} is polynomial-time solvable for some special types of graphs.
However, the exact complexity (P or NP-hard) of {\sf Global Label Cut} is still
unknown until now.

Very recently, Ghaffari et al. \cite{GKP17} proposed a randomized PTAS
for {\sf Global Label Cut}, where the authors called the problem the {\sf Min Hedge Cut}
problem. Their strategy is the simple but powerful edge contraction technique
developed in \cite{KS96}. Given any small constant $\epsilon > 0$,
in $O(n^{O(\log 1/\epsilon)})$ time, the algorithm in \cite{GKP17}
finds a $(1+\epsilon)$-approximation for {\sf Global Label Cut} with high probability.

Some experimental studies on {\sf Global Label Cut} have also been carried out.
Silva et al. \cite{SSO+16} designed exact algorithms for {\sf Global Label Cut}
using the branch-and-cut and branch-and-bound approaches based on integer programming
formulations for the problem. Bordini et al. \cite{BP17} designed exact algorithms
for {\sf Global Label Cut} using the variable neighborhood search technique.
Both of the authors \cite{SSO+16,BP17} evaluated their algorithms on many
concrete instances of the problem.

\bigskip

{\bf Notations.}
For the ease of statements, some commonly used notations are explained here.
For an input graph $G$, we use $n$ to denote its vertex number, and $m$ its
edge number. Given an instance $\cal I$ of an optimization problem such
as {\sf Label $s$-$t$ Cut},
we use $OPT({\cal I})$ to denote the optimal value of instance $\cal I$.
When $\cal I$ is known from the context, we simply use $OPT$ to denote
$OPT({\cal I})$.

In the {\sf Label $s$-$t$ Cut} problem, given an edge set $E'$, we use $L(E')$
to denote the set of labels appearing in $E'$. Note that $L$ also denotes
the label set in the {\sf Label $s$-$t$ Cut} problem.
We do not introduce more symbols to distinguish these two cases,
just keeping them simple and easily understandable.
Given an edge $e$, we use $\ell(e)$ to denote the label of $e$
(in this case $\ell$ is a mapping from $E(G)$ to $L$). Note that we also
write $\ell \in L$ and in this case $\ell$ denotes some label in $L$.
For simplicity, we do not introduce more symbols to distinguish
these two cases.

For clarity, we use the symbol ``:='' to define notations, and use the symbol
``='' to express equality.

{\bf Organization of the remainder of the paper.}
The remainder of the paper is organized as follows.
In Section \ref{sec - LP1}, we give the first linear program relaxation
(\ref{LP1 - weak LP for label cut}) for {\sf Label $s$-$t$ Cut} and
prove that its integrality gap is $\Omega(m)$.
In Section \ref{sec - LP2}, we give the second linear program relaxation
(\ref{LP2 - strong LP for label cut}) for {\sf Label $s$-$t$ Cut}.
Then the following three sections are used to analyze the integrality gap
of (\ref{LP2 - strong LP for label cut}).
In Section \ref{sec - construction of the instance}, we show the construction
of the {\sf Label $s$-$t$ Cut} instance used to prove the integrality
gap. In Section \ref{sec - high-level idea and the main theorem},
we depict the high-level idea of the proof and give the main theorem of
this paper. In Section \ref{sec - analysis of the integrality gap}, we show
the proof details of the integrality gap of (\ref{LP2 - strong LP for label cut}).
Finally, we conclude the paper in Section \ref{sec - conclusion and discussion}.

\section{A Linear Program and Its Integrality Gap}
\label{sec - LP1}
The following linear program (\ref{LP1 - weak LP for label cut})
is an LP-relaxation for {\sf Label $s$-$t$ Cut}.
In constraint (\ref{lp1 - cut constraint}), $\mathcal{P}_{st}$ denotes
the set of all simple $s$-$t$ paths in $G$,
where an $s$-$t$ path $P$ is viewed as a set of edges in that path.

\begin{alignat}{2}
\label{LP1 - weak LP for label cut}
\min \quad & \sum_{\ell\in L}x_{\ell} & & \tag{LP1} \\
\mbox{s.t.} \quad
\label{lp1 - cut constraint}
& \sum_{e \in P} x_{\ell(e)} \geq 1,
    & &\forall P \in \mathcal{P}_{st}\\
& x_{\ell} \geq 0,
    &\quad& \forall \ell \in L \nonumber
\end{alignat}

To see that (\ref{LP1 - weak LP for label cut}) is an LP-relaxation
for {\sf Label $s$-$t$ Cut}, consider its 0-1 integer version.
Given an instance of {\sf Label $s$-$t$ Cut}, we define
a variable $x_{\ell}\in\{0,1\}$ for each label $\ell \in L$.
The value of $x_{\ell}$ being 1 means that label $\ell$ is chosen and
its value being 0 means not. Constraint (\ref{lp1 - cut constraint}) is
to make sure that for every $s$-$t$ path $P$ in $G$, at least one label
from the edges of $P$ is chosen. Then the set of labels with $x_{\ell} = 1$
forms a solution to the problem.

It is easy to prove that (\ref{LP1 - weak LP for label cut}) has integrality
gap $\Omega(m)$.

\begin{theorem}
Linear program (\ref{LP1 - weak LP for label cut}) has integrality gap $\Omega(m)$.
\end{theorem}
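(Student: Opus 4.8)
The plan is to exhibit an explicit family of instances where the integer optimum is $\Omega(m)$ times larger than the fractional optimum of (\ref{LP1 - weak LP for label cut}). The key insight is that (\ref{LP1 - weak LP for label cut}) only constrains labels along individual $s$-$t$ paths, so a label that appears on \emph{many} parallel edges can be ``charged'' fractionally at low cost even though integrally each distinct label still costs a full unit. I would therefore look for a graph in which a small fractional $x$ suffices to satisfy every path constraint, but any integral label cut is forced to pick a large number of labels because the labels are spread out so that no single label blocks many paths.

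\smallskip
\noindent
The concrete construction I would try first is the simplest one: take $m$ parallel edges from $s$ to $t$ (a multigraph with two vertices), and give each edge its own distinct label $\ell_1,\dots,\ell_m$. Here $|L|=m$. Every simple $s$-$t$ path consists of a single edge $e_i$, so constraint (\ref{lp1 - cut constraint}) for that path reads $x_{\ell(e_i)} \ge 1$, i.e. $x_{\ell_i}\ge 1$. That would force the LP optimum to be $m$ as well, giving no gap — so this naive version fails, and the failure points directly at what must be fixed: the path constraints must be made ``long'' (many edges per path) so that the fractional mass can be split, while the labels must be made ``local'' (each blocking few paths) so that the integral cost stays high.

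\smallskip
\noindent
The corrected construction is to make each $s$-$t$ path long. I would build a path-like gadget: a sequence of $k$ edges in series forming a single $s$-$t$ path, but arranged so that there are many such series paths and each edge carries a distinct label. Concretely, consider $k$ internal ``layers'' where the only route from $s$ to $t$ traverses $k$ edges, each with a unique label; then the single path constraint is $\sum_{i=1}^k x_{\ell_i}\ge 1$, which is satisfied fractionally by setting each $x_{\ell_i}=1/k$ at total cost $1$, whereas any integral cut must pick at least one whole label, costing $1$ — still no gap. The essential trick, then, is to combine \emph{seriality} (to dilute the fractional cost, $\Theta(1/k)$ per label along a path) with \emph{a large integral requirement}, which one obtains by taking many label-disjoint parallel copies of such a long path so that blocking all of them integrally needs one label per copy. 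Setting the number of copies and the path length against the total edge count $m$ so that the fractional optimum is $O(1)$ while the integer optimum is $\Omega(m)$ is the calculation to carry out.

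\smallskip
\noindent
The main obstacle I anticipate is getting the two parameters to push against $m$ \emph{simultaneously}: making paths long enough to drive the fractional optimum down toward a constant, while keeping enough label-disjoint parallel structure that every integral label cut is forced to have size $\Omega(m)$, and verifying that the total edge count is only $O(m)$ rather than larger (which would weaken the gap). I would verify the integral lower bound by arguing that distinct parallel copies share no label, so a label cut must contain at least one label from each copy; and I would verify the fractional upper bound by writing down an explicit feasible $x$ (uniform small values along each series path) and checking every path constraint in $\mathcal{P}_{st}$ is met. The delicate point is ensuring $\mathcal{P}_{st}$ contains no short ``shortcut'' paths that would impose a tighter constraint and inflate the fractional optimum, so the graph must be designed so that \emph{every} simple $s$-$t$ path is long.
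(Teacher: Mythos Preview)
Your proposal has a genuine gap: every construction you consider uses \emph{distinct} labels on the edges, and with distinct labels the integrality gap of (\ref{LP1 - weak LP for label cut}) is always exactly $1$. Concretely, in your final construction --- $p$ parallel, label-disjoint paths of length $k$, so $m=pk$ and $|L|=pk$ --- the fractional optimum is $p$ (take $x_\ell=1/k$ for every label; total cost $pk\cdot\frac{1}{k}=p$), and the integral optimum is also $p$ (one label per path is necessary and sufficient). No choice of $p$ and $k$ rescues this; the ratio is identically $1$. Your opening sentence already contains the misdirection: a label appearing on many \emph{parallel} edges does nothing for the LP, since each parallel edge lies on a different path and the constraint for a single-edge path is simply $x_\ell\ge 1$.

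The missing idea is label \emph{repetition in series}. The constraint (\ref{lp1 - cut constraint}) sums $x_{\ell(e)}$ over edges $e\in P$, not over the distinct labels in $L(P)$; so if the same label $\ell$ sits on every edge of a long path, the constraint becomes $m\cdot x_\ell\ge 1$ and $x_\ell=1/m$ suffices. The paper's proof is exactly this one-line construction: a single $s$-$t$ path of $m$ edges, all carrying one common label $\ell$. Then $x_\ell=1/m$ is feasible with objective value $1/m$, while any integral solution must take $\ell$ and costs $1$, giving gap $m$. Once you allow repeated labels, there is no need for the parallel structure you were trying to engineer.
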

\begin{proof}
Consider the following instance. The graph $G$ (can be
either directed or undirected) is just an $s$-$t$ path of length $n-1$.
The label set $L$ contains only one label $\ell$. Each edge on the path is labeled
with this unique label. Then it is easy to verify that $x_{\ell} = \frac{1}{m}$
is a feasible solution to (\ref{LP1 - weak LP for label cut}) with objective value
$\frac{1}{m}$, while the optimal solution to the instance has value 1.
\end{proof}

\section{A More Strengthened Linear Program}
\label{sec - LP2}
A natural idea to strengthen (\ref{LP1 - weak LP for label cut})
is to sum $x_\ell$ in constraint (\ref{lp1 - cut constraint})
over labels in $L(P)$, instead of over edges in $P$. Thus we get
the following LP-relaxation (\ref{LP2 - strong LP for label cut})
for {\sf Label $s$-$t$ Cut}.

\begin{alignat}{2}
\label{LP2 - strong LP for label cut}
\min \quad & \sum_{\ell\in L}x_{\ell} & & \tag{LP2} \\
\mbox{s.t.} \quad
\label{lp2 - cut constraint}
& \sum_{\ell \in L(P)} x_{\ell} \geq 1,
    & &\forall P \in \mathcal{P}_{st}\\
& x_{\ell} \geq 0,
    &\quad& \forall \ell \in L \nonumber
\end{alignat}


Linear program (\ref{LP2 - strong LP for label cut}) is stronger than
(\ref{LP1 - weak LP for label cut}). Any feasible solution
to (\ref{LP2 - strong LP for label cut}) is still feasible
to (\ref{LP1 - weak LP for label cut}), but the opposite direction may not hold.
So, the integrality gap of (\ref{LP2 - strong LP for label cut})
should be hopefully smaller than that of (\ref{LP1 - weak LP for label cut}).
However, we prove that (\ref{LP2 - strong LP for label cut}) has still large
integrality gap $\Omega(m^{1/3-\epsilon})$, where $\epsilon > 0$ is any small constant.
The analysis is rather complicated and we have to defer it to several separated
sections later.

Our analysis of the integrality gap of (\ref{LP2 - strong LP for label cut})
is inspired by the idea from Charikar et al. \cite{CHK11},
who proved the integrality gap $\Omega(n^{1/3-\epsilon})$
of their linear programming relaxation for a variant of the {\sf Min Label Cover}
problem \cite{AL97}. We follow the analysis framework of \cite{CHK11}.
However, our instance construction is different to \cite{CHK11} and more complicated.
Consequently, in the proof we need more complicated analysis.

In the following we first show how to construct the {\sf Label $s$-$t$ Cut} instance
used in the analysis of integrality gap in
Section \ref{sec - construction of the instance}.
After knowing how the instance is constructed, it is easy to depict the high-level
idea of the analysis, which is done in
Section \ref{sec - high-level idea and the main theorem}. The details of the
analysis is given Section \ref{sec - analysis of the integrality gap}.

In the analysis we shall use many symbols and notations.
To facilitate the reading, we list them in Table \ref{tab - notations}.

\begin{table}
\begin{center}
\begin{tabular}{c|l}
\hline
Notation & Meaning \\
\hline
\hline
$\Phi$ & Ground set of elements \\
\hline
$\mu,\nu$ & Elements in $\Phi$ \\
\hline
$k$ & Number of elements in $\Phi$ \\
\hline
$d$ & Number of diamonds in a chain \\
\hline
$\sigma$ & Random mapping from $[d]$ to $[d]$ \\
\hline
$H_{\mu\nu}$ & Shutter of $\mu$ and $\nu$ \\
\hline
$h$ & Number of chains in a shutter \\
\hline
$C^i_{\mu\nu}$ & The $i$-th chain in shutter $H_{\mu\nu}$ \\
\hline
$\sigma^i_{\mu\nu}$ & The random mapping on chain $C^i_{\mu\nu}$ \\
\hline
$L'$ & Label subset \\
\hline
$c$ & Size of $L'$ \\
\hline
$J_{\mu}$ & Set of $j$'s such that $(\mu, j) \in L'$ \\
\hline
$a$ & Average of $|J_{\mu}|$'s \\
\hline
$\Phi'$ & Light ground set of $L'$ \\
        & (i.e., set of $\mu$'s such that $|J_{\mu}| \leq 4a$) \\
\hline
${\cal H}_{\Phi'}$ & The set all shutters for every two ordered \\
                   & pair in $\Phi'$ \\
\hline
$F$ & Configuration of $L'$ \\
\hline
$z$ & The upper bound of the probability that \\
    & there exists a good configuration for $\cal I$ \\
\hline
$\epsilon$ & Any given positive small constant \\
\hline
$\beta,\delta$ & Two constants depending on $\epsilon$ \\
\hline
\end{tabular}
\vspace{2pt}
\caption{Main notations used in the analysis of (\ref{LP2 - strong LP for label cut})}
\label{tab - notations}
\end{center}
\end{table}

\section{Construction of the Instance}
\label{sec - construction of the instance}
Let $k$, $d$ and $h$ be three integer parameters that will be determined later.
Define
\begin{equation}
\label{eqn - def of U}
\Phi := \{1, 2, \ldots, k\}
\end{equation}
as a {\em ground set} of $k$ {\em elements}.

{\em Remarks.}
The values of $d$ and $h$ are given in (\ref{eqn - def of d}) and
(\ref{eqn - def of h}), which are both functions of $k$. As for $k$, we only need
it to be a sufficiently large integer. The specific requirement on $k$
(how large $k$ should be) is given in (\ref{eqn - how large k should be}).

\subsection{The Chain Gadget}
\label{sec - the chain gadget}
First we introduce the chain gadget as shown in Figure \ref{fig - gadget chain}.
This gadget will be repeatedly used in the construction of the {\sf Label $s$-$t$ Cut}
instance.

\begin{figure}
\begin{center}
\includegraphics*[width=0.85\textwidth]{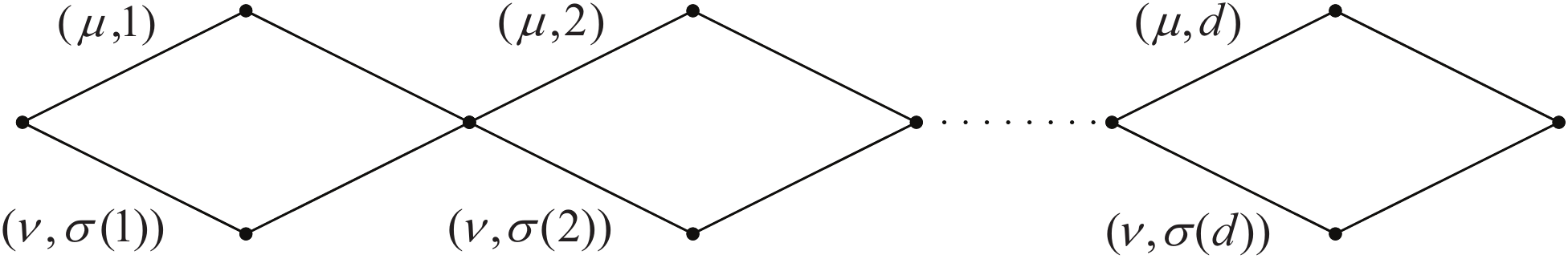}
\end{center}
\caption{The chain gadget.
\label{fig - gadget chain}}
\end{figure}

A chain is constituted of $d$ diamonds. By diamond we mean
a simple cycle of four edges, with two of them being {\em top edges} and
the other two being {\em bottom edges}, as shown in Figure \ref{fig - gadget chain}.
In a chain, two elements $\mu$ and $\nu$ from the ground set $U$ will be used
to constitute labels on edges. Every label is of the form $(\mu, j)$, where $\mu$
is an element from $\Phi$ and $j$ is an integer from $[d]$ ($[d]$ denotes the set
$\{1, 2, \ldots, d\}$). Besides, we will use a random mapping
\[
    \sigma \colon [d] \rightarrow [d],
\]
which is a permutation drawn uniformly at random.
We would like to say that the random permutation $\sigma$ plays an important role
in the analysis of the integrality gap of (\ref{LP2 - strong LP for label cut}).
We will interchangeably use permutation and mapping for $\sigma$.

In the $j$-th diamond for each $j \in [d]$, the two top edges are labeled
with label $(\mu, j)$, and the two bottom edges are labeled with $(\nu, \sigma(j))$.
For clarity, in each diamond in Figure \ref{fig - gadget chain}, the labels
on the latter top edge and the latter bottom edge are omitted.

It is then clear why we use diamonds to constitute a chain. We just want to
make the resulting graph being a simple graph. In fact, if multi-edges are allowed,
we could also use 2-edge cycles to constitute a chain.

For a chain, we call the set of all the top edges of all its diamonds
the {\em top half-chain}, and call the set of all the bottom edges of all its diamonds
the {\em bottom half-chain}.
It is important to note that there is a mapping from the second components of labels
on the top half-chain to the second components of labels on the bottom half-chain.
This mapping, is just the random mapping $\sigma$.

\subsection{The Shutter Gadget and the Final Graph}
For each pair of elements $\mu$ and $\nu$ in $\Phi$ such that $\mu < \nu$, we construct
a shutter gadget $H_{\mu\nu}$ as shown in Figure \ref{fig - gadget shutter}.
Shutter $H_{\mu\nu}$ consists of $h$ chains $C_{\mu\nu}^1,$ $C_{\mu\nu}^2,$ $\ldots,$
$C_{\mu\nu}^h$, where each chain is the one constructed in Section \ref{sec - the chain gadget}.
All the left endpoints of the $h$ chains are merged into a single vertex $s_{\mu\nu}$,
while all the right endpoints of the $h$ chain are merged into a single vertex $t_{\mu\nu}$.

\begin{figure}
\begin{center}
\includegraphics*[width=1\textwidth]{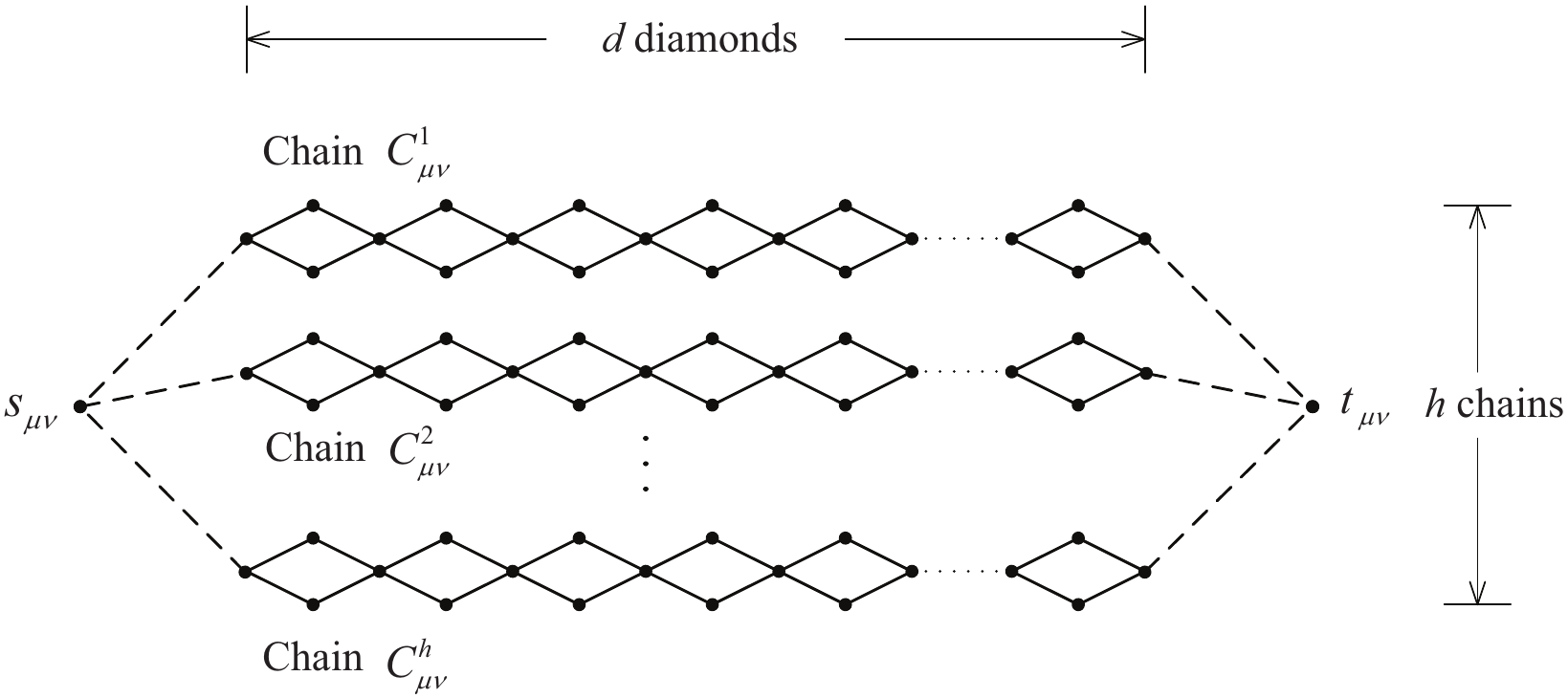}
\end{center}
\caption{The gadget $H_{\mu\nu}$. The dashed lines means that the two endpoints
of each of them are actually merged into a single vertex.
\label{fig - gadget shutter}}
\end{figure}

Note that in the shutter gadget we have $h$ independent random permutations,
denoted by $\sigma_{\mu\nu}^1$, $\sigma_{\mu\nu}^2$, $\ldots$, $\sigma_{\mu\nu}^h$.
Therefore, the only difference between two chains of a shutter is the difference
of their labels.
More specifically, the only difference between two chains of a shutter is
the difference of the labels of their bottom half-chains.
In a shutter, all top half-chains have the same set of labels.

Given the $k \choose 2$ shutter gadgets $H_{11}$, $H_{12}$, $\ldots$, $H_{k-1,k}$
constructed as above, we merge all the left endpoints of these shutters
into a single vertex, which is the source vertex $s$. Similarly, we merge
all the right endpoints of these shutters into a single vertex,
which is the sink vertex $t$. This is our final graph $G$, as shown
in Figure \ref{fig - graph G}. It is easy to see that graph $G$ can be made
directed by orienting all its edges from $s$ to $t$.
Note that all the random permutations appeared in $G$ are independent.

\begin{figure}
\begin{center}
\includegraphics*[width=0.85\textwidth]{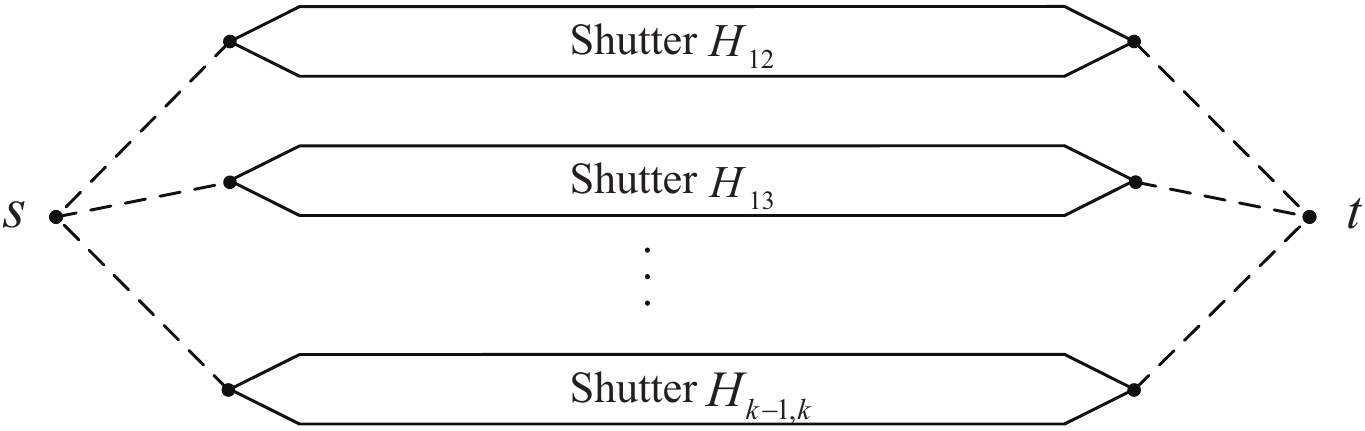}
\end{center}
\caption{Graph $G$. The dashed lines means that the two endpoints
of each of them are actually merged into a single vertex.
\label{fig - graph G}}
\end{figure}

At last, let
\[
L := \{(\mu,j) \colon \mu \in \Phi, j \in [d] \}.
\]
Thus we get the random {\sf Label $s$-$t$ Cut} instance ${\cal I} := (G, s, t, L)$.
By the construction, we know that
\begin{eqnarray}
\label{eqn - n = Theta(k^2 dh)}
&{}& n := |V(G)|=\Theta(k^{2} d h), \\
&{}& m := |E(G)|=\Theta(k^{2} d h), \nonumber \\
\label{eqn - q = kd}
&{}& q := |L| = k d.
\end{eqnarray}

\section{High-level Idea to
Analyze the Integrality Gap of (\ref{LP2 - strong LP for label cut})
and the Main Theorem}
\label{sec - high-level idea and the main theorem}

\subsection{The High-level Idea}
After we have known how to construct the {\sf Label $s$-$t$ Cut} instance $\cal I$,
it is now appropriate to state the high-level idea to prove that
linear program (\ref{LP2 - strong LP for label cut}) has large integrality gap.
The instance we have just constructed is a {\em random instance}.
Note that in the instance we use random permutations to generate labels
for all the chains, and the random permutations are independent and uniform at random.
This fact will play an important role in our analysis.

Let us fix a positive number $c$ which is the size of a presumed solution
to the random instance $\cal I$. The overall strategy is to prove that
there exists a fixed instance (i.e., sample) $\hat{\cal I}$ of random instance
$\cal I$, for which {\em any} presumed solution of the given size $c$
is {\em not feasible}. This means that instance $\hat{\cal I}$ has relatively
large integral optimum (i.e., $OPT(\hat{\cal I}) \geq c$).
This is the technical result of this paper, which is formally stated
in the following Lemma \ref{lm - Integral OPT of LP2 is large}.
Meanwhile, it is not difficult to prove that the fractional
optimum of (\ref{LP2 - strong LP for label cut}) on instance $\hat{\cal I}$
(i.e., $OPT_f(\ref{LP2 - strong LP for label cut}(\hat{\cal I}))$) is relatively small.
Consequently, a large integrality gap of (\ref{LP2 - strong LP for label cut})
is concluded by carefully choosing the parameters in the instance construction.

\begin{lemma}[The Technical Lemma]
\label{lm - Integral OPT of LP2 is large}
For any small constant $\epsilon > 0$, there exists a constant $k_0$
which depends only on $\epsilon$, such that for any integer $k \geq k_0$,
there exists a {\sf Label $s$-$t$ Cut} instance $\hat{\cal I}$
whose minimum label cut is of size $\Omega(kn^{1/3-\epsilon})$.
\end{lemma}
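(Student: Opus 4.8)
The plan is to argue by the probabilistic method: I will show that for the random instance ${\cal I}$ produced by the construction, with probability strictly greater than $0$ there is \emph{no} label cut of size at most $c$, where $c$ is a threshold to be tuned. Since the randomness lives only in the independent permutations $\sigma^i_{\mu\nu}$, this yields a fixed sample $\hat{\cal I}$ with $OPT(\hat{\cal I}) > c$, which is what the lemma asserts once $c = \Omega(kn^{1/3-\epsilon})$. The first step is a clean feasibility characterization. For a candidate label set $L'$ put $J_\mu := \{j : (\mu,j)\in L'\}$. Tracing a single chain $C^i_{\mu\nu}$ (with $\mu<\nu$), in its $j$-th diamond the top route dies exactly when $(\mu,j)\in L'$ and the bottom route exactly when $(\nu,\sigma^i_{\mu\nu}(j))\in L'$, so the chain is cut iff some $j\in J_\mu$ satisfies $\sigma^i_{\mu\nu}(j)\in J_\nu$, i.e.\ iff $\sigma^i_{\mu\nu}(J_\mu)\cap J_\nu\neq\emptyset$. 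As the shutters are parallel between $s$ and $t$ and the chains are parallel inside a shutter, $L'$ is a valid cut iff every chain of every shutter is cut; in particular each $J_\mu\neq\emptyset$, which already forces $|L'|\ge k$ and explains the leading factor $k$.

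The probabilistic heart is a per-chain estimate. Because $\sigma^i_{\mu\nu}$ is a uniform random permutation, $\sigma^i_{\mu\nu}(J_\mu)$ is a uniform random $|J_\mu|$-subset of $[d]$, so
\[
\Pr\bigl[C^i_{\mu\nu}\text{ not cut}\bigr]=\frac{\binom{d-|J_\nu|}{|J_\mu|}}{\binom{d}{|J_\mu|}}\ \ge\ \Bigl(1-\tfrac{|J_\nu|}{d-|J_\mu|}\Bigr)^{|J_\mu|}.
\]
Set $c:=|L'|$ and $a:=c/k$. By Markov's inequality the light ground set $\Phi':=\{\mu:|J_\mu|\le 4a\}$ has $|\Phi'|\ge \tfrac34 k$, and for a light pair $\mu,\nu\in\Phi'$ the display gives $\Pr[\text{cut}]\le 16a^2/(d-4a)$, which I drive below $\tfrac12$ by taking $d=\Theta(a^2)$ with a large constant. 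Since all permutations in $G$ are independent, the cut events over the $h$ chains of each shutter and over the $\binom{|\Phi'|}{2}=\Omega(k^2)$ light pairs are mutually independent; keeping only the light--light constraints yields
\[
\Pr\bigl[L'\text{ is a valid cut}\bigr]\ \le\ \Bigl(\tfrac12\Bigr)^{h\binom{|\Phi'|}{2}}\ =\ 2^{-\Omega(hk^2)}.
\]

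The last step is a union bound organized through the configuration $F$, which I take to be the restriction $(\Phi',(J_\mu)_{\mu\in\Phi'})$ of $L'$ to its light elements; call $F$ \emph{good} if it cuts every light--light chain. Every valid cut of size $\le c$ gives rise to a good configuration, so $\Pr[OPT({\cal I})\le c]\le z:=\Pr[\exists\text{ good }F]$. A configuration is fixed by choosing $\Phi'$ (at most $2^k$ ways) and one subset of $[d]$ of size $\le 4a$ per light element, so the number of configurations is $\exp\!\bigl(O(ka\log(d/a))\bigr)$; multiplying by the per-configuration bound gives $z\le \exp\!\bigl(O(ka\log(d/a))-\Omega(hk^2)\bigr)$. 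With $d=\Theta(a^2)$ we have $\log(d/a)=\Theta(\log a)$, so $z<1$ as soon as $a\log a=O(hk)$; I therefore take $a=\Theta\!\bigl(hk/\log k\bigr)$ and obtain a sample $\hat{\cal I}$ with $OPT(\hat{\cal I})>c=ka=\Theta\!\bigl(hk^2/\log k\bigr)$. Finally I set $h=k^{\Theta(1/\epsilon)}$; substituting $d=\Theta(a^2)$ into $n=\Theta(k^2dh)$ and solving for $a$ in terms of $n$, the logarithmic and the residual $k$ factors are absorbed into the $\epsilon$-slack, giving $a=\Omega(n^{1/3-\epsilon})$ and hence $OPT(\hat{\cal I})=\Omega(kn^{1/3-\epsilon})$.

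I expect the main obstacle to be this concluding balancing act rather than either individual estimate. The per-chain bound forces $d=\Omega(a^2)$, which inflates $n=\Theta(k^2dh)$ and naively degrades the gap from $n^{1/3}$ to roughly $(n/k)^{1/3}$; recovering $n^{1/3-\epsilon}$ requires making $h$ a polynomial in $k$ of degree growing like $1/\epsilon$, so that $n$ is dominated by $h$ and the stray $k^{1/3}$ is swallowed by $n^{\epsilon}$. The other delicate point is keeping the configuration count small enough that $O(ka\log(d/a))$ stays below $\Omega(hk^2)$; this is precisely why the light-element restriction $F$, rather than the full set $L'$, is the correct object to union bound over, since it lets us avoid paying for the (unconstrained) heavy part of $L'$.
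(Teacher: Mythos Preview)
Your proposal is correct and follows essentially the same approach as the paper: the light ground set $\Phi'$ via Markov, the configuration $F$ as the restriction of $L'$ to $\Phi'$, the per-chain ``miss'' probability $\binom{d-|J_\nu|}{|J_\mu|}/\binom{d}{|J_\mu|}$ bounded away from zero once $d=\Theta(a^2)$, independence over the $h\binom{|\Phi'|}{2}=\Omega(hk^2)$ chains, and a union bound over $\exp(O(ka\log d))$ configurations. Your parameter tuning ($a=\Theta(hk/\log k)$, $h=k^{\Theta(1/\epsilon)}$) is a minor variant of the paper's explicit choices $a=k^{\delta}$, $h=k^{\delta-1+\epsilon/2}$ with $\delta\ge 1/(3\epsilon)$, but the two are equivalent up to how the slack is absorbed into $n^{\epsilon}$.
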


We shall prove Lemma \ref{lm - Integral OPT of LP2 is large}
in Section \ref{sec - analysis of the integrality gap}.
Here we show the idea of the proof. Let $c > 0$ be a number we will fix later,
and $L' \subseteq L$ be any label subset of size $c$. We show that
there exist an element subset $\Phi' \subseteq \Phi$ determined by $L'$,
and a set ${\cal H}_{\Phi'}$ of shutters determined in turn by $\Phi'$, such that
(i) ${\cal H}_{\Phi'}$ consists of large number of shutters, and
(ii) for each shutter in ${\cal H}_{\Phi'}$, $L'$ only contains bounded number
of labels in the shutter. Since the random mapping of labels on every
chain in each shutter is drawn independently, the probability
that $s$ and $t$ are separated in ${\cal H}_{\Phi'}$ by $L'$ is very small.
Consequently, for a particularly specified but still large number $c$,
there exists a fixed instance (i.e., sample) $\hat{\cal I}$ of the random
instance $\cal I$, such that any $L'$ of size $c$ cannot separate $s$ and $t$
in the corresponding ${\cal H}_{\Phi'}$ of $\hat{\cal I}$.
So, we get a large lower bound on the optimal value of instance $\hat{\cal I}$,
that is, $OPT(\hat{\cal I}) \geq c$.

\subsection{The Main Theorem}
\begin{lemma}
\label{lm - Fractional OPT of LP2 is small}
For any fixed instance (i.e., sample) ${\cal I}'$ of the random {\sf Label $s$-$t$ Cut}
instance $\cal I$ constructed in Section \ref{sec - construction of the instance},
we have
\[
\text{OPT}_f(\ref{LP2 - strong LP for label cut}({\cal I}')) \leq k,
\]
where $\text{OPT}_f(LP2({\cal I}'))$ is the fractional optimum of
(\ref{LP2 - strong LP for label cut}) on instance ${\cal I}'$.
\end{lemma}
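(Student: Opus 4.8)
The plan is to exhibit an explicit feasible fractional solution to (\ref{LP2 - strong LP for label cut}) whose objective value is at most $k$, since the fractional optimum is bounded above by the value of any feasible point. The key observation driving the construction is that every $s$-$t$ path in $G$ must traverse a single shutter $H_{\mu\nu}$ from its merged left endpoint $s_{\mu\nu}$ to its merged right endpoint $t_{\mu\nu}$ (because $s$ and $t$ are exactly the merges of the shutter endpoints), and within that shutter the path runs along one chain $C^i_{\mu\nu}$, passing through all $d$ diamonds in sequence. In each diamond the path picks exactly one of the two parallel top edges and one of the two parallel bottom edges, so the labels it encounters are precisely a set of top-half labels $(\mu, j)$ for $j \in [d]$ together with bottom-half labels $(\nu, \sigma^i_{\mu\nu}(j))$. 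In particular, the set of \emph{first components} of labels seen on any single $s$-$t$ path is exactly $\{\mu, \nu\}$ for the shutter $H_{\mu\nu}$ the path lives in.

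First I would define the fractional assignment by spreading the unit weight uniformly across the second component: set $x_{(\mu,j)} := \tfrac{1}{d}$ for every label $(\mu,j) \in L$. The objective value is then $\sum_{\ell \in L} x_\ell = q \cdot \tfrac{1}{d} = kd \cdot \tfrac{1}{d} = k$ using (\ref{eqn - q = kd}), which matches the claimed bound exactly. The next step is to verify feasibility of constraint (\ref{lp2 - cut constraint}) for an arbitrary path $P \in \mathcal{P}_{st}$. Fix the shutter $H_{\mu\nu}$ and chain index $i$ that $P$ uses. As $P$ passes through the $j$-th diamond it must use at least one top edge (labeled $(\mu,j)$) and at least one bottom edge (labeled $(\nu, \sigma^i_{\mu\nu}(j))$), so for every $j \in [d]$ both $(\mu, j) \in L(P)$ and $(\nu, \sigma^i_{\mu\nu}(j)) \in L(P)$. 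Summing just over the top-half labels already gives $\sum_{\ell \in L(P)} x_\ell \geq \sum_{j=1}^{d} x_{(\mu,j)} = d \cdot \tfrac{1}{d} = 1$, so the constraint is satisfied (in fact with room to spare, since the bottom-half labels contribute an additional $1$).

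There is one subtlety I would check carefully rather than gloss over: I must confirm that $L(P)$ genuinely contains all $d$ distinct labels $(\mu, 1), \ldots, (\mu, d)$, i.e. that a simple $s$-$t$ path cannot skip diamonds. Since the chain $C^i_{\mu\nu}$ is a serial concatenation of $d$ diamonds — the right endpoint of diamond $j$ is identified with the left endpoint of diamond $j{+}1$ — any path from $s_{\mu\nu}$ to $t_{\mu\nu}$ along this chain is forced through every diamond in order, so all $d$ top labels do appear. This is really the only place where the chain structure (as opposed to a single parallel bundle) matters for the argument, and I expect it to be the main, though still routine, obstacle: it amounts to arguing that the chain is a genuine series composition so no label $(\mu,j)$ can be avoided. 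Once this is established the feasibility bound is immediate, and since $\text{OPT}_f(\ref{LP2 - strong LP for label cut}({\cal I}'))$ is the minimum over all feasible solutions, it is at most the value $k$ of this particular one, completing the proof.
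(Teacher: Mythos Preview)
Your fractional assignment $x_{(\mu,j)} = 1/d$ and the objective-value computation are exactly what the paper does. The gap is in your feasibility check, and it stems from a misreading of the diamond gadget. A diamond is a simple $4$-cycle with left and right endpoints and one intermediate vertex on top and one on the bottom; the two top edges form a length-$2$ path left$\to$top$\to$right, and likewise for the bottom. Hence a simple path traversing a diamond uses \emph{either} both top edges \emph{or} both bottom edges, never one of each. So your claim that ``$P$ must use at least one top edge and at least one bottom edge'' in every diamond is false, and in particular $L(P)$ need not contain all of $(\mu,1),\ldots,(\mu,d)$: whenever $P$ takes the bottom route in diamond $j$, the label $(\mu,j)$ is absent from $L(P)$.

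The fix is short and is essentially what the paper asserts in one line: in the $j$-th diamond the path picks up exactly one label, either $(\mu,j)$ or $(\nu,\sigma^i_{\mu\nu}(j))$, and across the $d$ diamonds these $d$ labels are pairwise distinct (the $\mu$-labels are distinct because their second components are distinct $j$'s; the $\nu$-labels are distinct because $\sigma^i_{\mu\nu}$ is a permutation; and $\mu \neq \nu$ separates the two groups). Thus $|L(P)| = d$ exactly, giving $\sum_{\ell \in L(P)} x_\ell = d\cdot \tfrac{1}{d} = 1$. With this correction your argument matches the paper's.
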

\begin{proof}
For each label $(u,j) \in L$, we assign $x_{(u,j)} = 1/d$.
The only constraint (\ref{lp2 - cut constraint}), that is,
$\sum_{\ell \in L(P)} x_{\ell} \geq 1$ for any $s$-$t$ path $P$, is satisfied
since any simple $s$-$t$ path in graph $G$ contains exactly $d$ distinct labels.
So, $x$ is a feasible solution to (\ref{LP2 - strong LP for label cut}),
whose objective value is $\sum_{(u,j)\in L} x_{(u,j)} = |L|/d = k$.
This implies $\text{OPT}_f(\ref{LP2 - strong LP for label cut}({\cal I}')) \leq k$.
\end{proof}

With the help of Lemma \ref{lm - Integral OPT of LP2 is large} and
Lemma \ref{lm - Fractional OPT of LP2 is small}, it is easy to prove
the main theorem.

\begin{theorem}
\label{th - integrality gap of LP2}
The integrality gap of the LP-relaxation (\ref{LP2 - strong LP for label cut})
is $\Omega(n^{1/3-\epsilon})$, where $\epsilon > 0$ is any small constant.
\end{theorem}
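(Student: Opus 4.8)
The plan is to combine the two lemmas that precede Theorem \ref{th - integrality gap of LP2}. The theorem asserts an integrality gap of $\Omega(n^{1/3-\epsilon})$ for (\ref{LP2 - strong LP for label cut}), and by the definition of integrality gap it suffices to exhibit a single instance on which the ratio $OPT/OPT_f$ is at least $\Omega(n^{1/3-\epsilon})$. The two lemmas are exactly tailored for this: Lemma \ref{lm - Fractional OPT of LP2 is small} gives an upper bound on the fractional optimum that holds for \emph{every} sample of the random instance $\cal I$, namely $OPT_f(\ref{LP2 - strong LP for label cut}({\cal I}')) \leq k$, and Lemma \ref{lm - Integral OPT of LP2 is large} (the Technical Lemma) provides, for any small $\epsilon > 0$ and all sufficiently large $k$, the existence of a specific sample $\hat{\cal I}$ whose integral optimum satisfies $OPT(\hat{\cal I}) = \Omega(k n^{1/3-\epsilon})$.

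First I would fix $\epsilon > 0$ and take $k \geq k_0(\epsilon)$ large enough to invoke the Technical Lemma, obtaining the sample instance $\hat{\cal I}$ with $OPT(\hat{\cal I}) = \Omega(k n^{1/3-\epsilon})$. Since $\hat{\cal I}$ is itself a sample of the random instance $\cal I$, Lemma \ref{lm - Fractional OPT of LP2 is small} applies to it and yields $OPT_f(\ref{LP2 - strong LP for label cut}(\hat{\cal I})) \leq k$. Taking the quotient, the integrality gap on this single instance is at least
\[
\frac{OPT(\hat{\cal I})}{OPT_f(\ref{LP2 - strong LP for label cut}(\hat{\cal I}))} \;\geq\; \frac{\Omega(k n^{1/3-\epsilon})}{k} \;=\; \Omega(n^{1/3-\epsilon}).
\]
The factor of $k$ cancels cleanly, which is precisely why the Technical Lemma was stated with the $k$ factored out front. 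Because the integrality gap is the supremum of this ratio over all instances, the bound $\Omega(n^{1/3-\epsilon})$ on a single instance already establishes the theorem.

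The only subtlety I would be careful about is the role of $\epsilon$. The Technical Lemma's guarantee is stated in terms of $n^{1/3-\epsilon}$ where $n = \Theta(k^2 d h)$ from (\ref{eqn - n = Theta(k^2 dh)}), so the dependence of $d$ and $h$ on $k$ (fixed in the construction via (\ref{eqn - def of d}) and (\ref{eqn - def of h})) determines how $n$ scales and hence how the exponent interacts with the $\epsilon$ slack; I would simply rely on the Technical Lemma having already absorbed these choices, since the lemma is stated directly in terms of $n$. There is no real obstacle here: both inequalities are in hand, and the proof is a one-line division. \textbf{The entire difficulty of the paper lives in the Technical Lemma} (Lemma \ref{lm - Integral OPT of LP2 is large}), whose probabilistic argument—showing that no label subset of size $c$ can, with positive probability, separate $s$ and $t$ across the many independently-permuted chains of the relevant shutters—is deferred to Section \ref{sec - analysis of the integrality gap}. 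Granting that lemma, the theorem itself requires nothing more than assembling the two bounds.
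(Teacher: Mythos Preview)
Your proposal is correct and matches the paper's own proof essentially line for line: the paper also takes the instance $\hat{\cal I}$ from Lemma \ref{lm - Integral OPT of LP2 is large}, applies Lemma \ref{lm - Fractional OPT of LP2 is small} to bound the fractional optimum by $k$, and divides to obtain the $\Omega(n^{1/3-\epsilon})$ gap. As you rightly note, all the work is in the Technical Lemma, and the theorem itself is a one-line assembly of the two bounds.
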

\begin{proof}
Let us consider (\ref{LP2 - strong LP for label cut})
on instance $\hat{\cal I}$ given in Lemma \ref{lm - Integral OPT of LP2 is large}.

By Lemma \ref{lm - Fractional OPT of LP2 is small}, we have
$\text{OPT}_f(\ref{LP2 - strong LP for label cut}(\hat{\cal I})) \leq k$,
where $\text{OPT}_f(\ref{LP2 - strong LP for label cut}(\hat{\cal I}))$
is the fractional optimum of (\ref{LP2 - strong LP for label cut})
on instance $\hat{\cal I}$. By Lemma \ref{lm - Integral OPT of LP2 is large},
the integrality gap of (\ref{LP2 - strong LP for label cut})
on instance $\hat{\cal I}$
is
\[
\frac{\text{OPT}(\hat{\cal I})}{\text{OPT}_f(LP2)} \geq \frac{\Omega(k n^{1/3-\epsilon})}{k}
= \Omega(n^{1/3-\epsilon}).
\]
\end{proof}

%
%

\noindent
{\bf Theorem \ref{th - integrality gap of LP2 in terms of m}.}
(restated)
\begin{em}
The integrality gap of the LP-relaxation (\ref{LP2 - strong LP for label cut})
is $\Omega(m^{1/3-\epsilon})$, where $\epsilon > 0$ is any small constant.
\end{em}
\begin{proof}
By Theorem \ref{th - integrality gap of LP2} and the fact that $m = \Theta(n)$
for the constructed instance $\cal I$.
\end{proof}

\section{Analysis of the Integrality Gap -- the Details}
\label{sec - analysis of the integrality gap}
Let $c > 0$ be an integer which denotes the size of a label subset of $L$.
We shall show that for a particularly chosen value of $c$
(see (\ref{eqn - def of c})), there exists a fixed
instance $\hat{\cal I}$ of the random {\sf Label $s$-$t$ Cut} instance $\cal I$,
for which no label cut of size $c$ exists. Thus we infer a lower bound
(see (\ref{eqn - lower bound on c})) on the size of the minimum
label cut of $\hat{\cal I}$.

\subsection{Structure of the Solution}
Let $L' \subseteq L$ be any label subset of size $c$.
$L'$ will be used as a solution to {\sf Label $s$-$t$ Cut}, but it may
not be feasible. Recall that $\Phi$ is the ground set (see (\ref{eqn - def of U})).
Define
\[
    J_{\mu} := \{j \in [d] \colon (\mu,j) \in L' \}, \qquad \forall \mu \in \Phi.
\]
Then,
\begin{equation}
\label{eqn - c = sum of c_mu}
    c = |L'| = \sum_{\mu \in \Phi}|J_{\mu}|.
\end{equation}
Moreover, define
\begin{equation}
\label{eqn - definition of a}
a := \frac{c}{k}.
\end{equation}
Then $a$ is the average of $|J_{\mu}|$'s.

For an element $\mu \in \Phi$, if $|J_{\mu}| \leq 4a$, then $\mu$ is called
a {\em light} element (which means that it appears not heavily in $L'$).
Otherwise $\mu$ is called a {\em heavy} element.

We further define
\begin{equation}
\label{eqn - Phi'}
    \Phi' := \{\mu \in \Phi \colon |J_{\mu}| \leq 4a \}.
\end{equation}
$\Phi'$ is called the {\em light ground set} with respect to $L'$.
For each element $\mu \in \Phi'$, there are $|J_{\mu}| \leq 4a$ labels in $L'$
related to $\mu$. Note that $\Phi'$ also contains the elements $\mu$
that $|J_{\mu}| = 0$ (if there are).

By (\ref{eqn - c = sum of c_mu}), the number of
elements $\mu$ such that $|J_{\mu}| > 4a$ is at most $k/4$. This implies
that the number of elements $\mu$ such that $|J_{\mu}| \leq 4a$ is at least $3k/4$.
That is, we have
\begin{equation}
\label{eqn - |Phi'| >= 3k/4}
    |\Phi'| \geq \frac34 k.
\end{equation}
Thus $\Phi'$ contains most elements in $\Phi$.
$\Phi'$ is called a light ground set in the sense that each element in $\Phi'$
appears not heavily in $L'$.

\begin{definition}
{\bf Solution configuration.}

Given a label subset, its light ground set is defined accordingly
(as in (\ref{eqn - Phi'})). The set of all the labels in this label subset
that are related to some element in its light ground set is called
a {\em solution configuration} (configuration for short).
\end{definition}

Let $F$ be the configuration determined by $L'$. Then we have
\[
    F = \{(\mu, j) \in L' \colon \mu \in \Phi'\}.
\]
By definition, configuration $F$ possesses the following property:
For every element $\mu$ that appears in $F$, the number of labels in $F$
which are related to $\mu$ is at most $4a$.
Note that different solutions (they may not be feasible in general)
may lead to the same configuration.

\bigskip

{\bf Some further explanations for $L'$, $\Phi'$, and $F$.}
If we define
\[
    \Phi_{L'} := \{\mu \in \Phi \colon \exists j, (\mu, j) \in L' \},
\]
then in general we may not have $\Phi_{L'} \subseteq \Phi'$,
since $\Phi_{L'}$ may contain a heavy element while every element in $\Phi'$
is light. Similarly, in general we also may not have $\Phi' \subseteq \Phi_{L'}$,
since $\Phi'$ may contain an element $\mu$ with $|J_{\mu}| = 0$,
while $|J_{\mu}| = 0$ means that $\mu$ is not in $\Phi_{L'}$ at all.

If we define
\[
    \Phi_F := \{\mu \in \Phi \colon \exists j, (\mu, j) \in F \},
\]
then naturally we have $\Phi_F \subseteq \Phi_{L'}$. $\Phi_{L'}$ may contain
heavy element(s), while $\Phi_F$ never contain such elements.
However, neither $\Phi_F$ nor $\Phi_{L'}$ contains an element $\mu$
with $|J_{\mu}| = 0$. By definitions, we actually have
\[
    \Phi_F = \Phi' \cap \Phi_{L'}.
\]

See Figure \ref{fig - configuration} for an illustration of configuration $F$.

\begin{figure}
\begin{center}
\includegraphics*[width=0.9\textwidth]{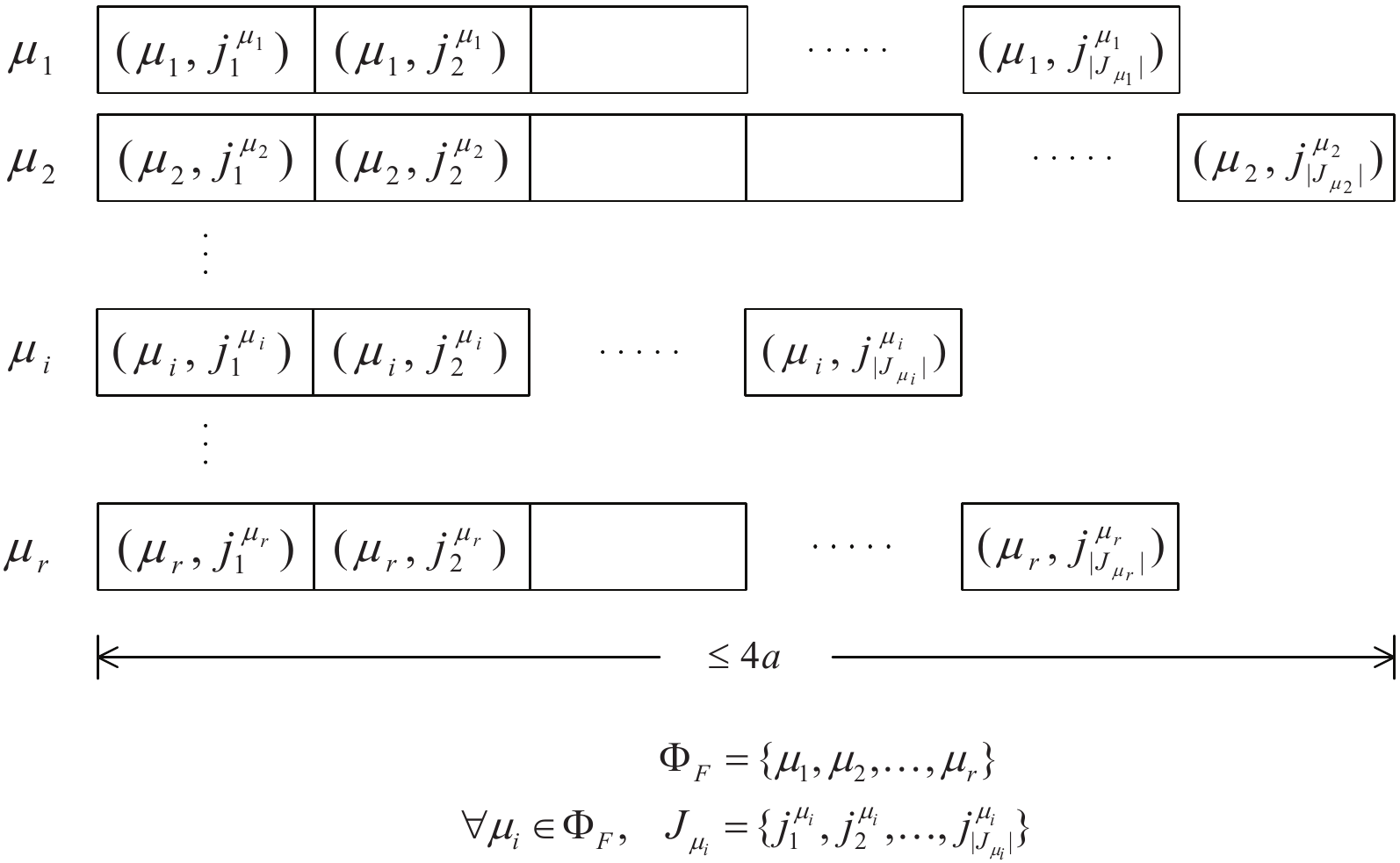}
\end{center}
\caption{An illustration of a solution configuration.
Each block denotes a label.
\label{fig - configuration}}
\end{figure}

%
%

\subsection{Relationship of the Random Mapping and the Configuration}
Let ${\cal H}_{\Phi'}$ be the set of all shutter gadgets $H_{\mu\nu}$
for every pair $\mu, \nu \in \Phi'$ such that $\mu < \nu$, i.e.,
\[
    {\cal H}_{\Phi'} := \{ H_{\mu\nu} \colon \mu, \nu \in \Phi', \mu < \nu\}.
\]
In the following we compute the probability that $s$ and $t$ are separated
in ${\cal H}_{\Phi'}$ by $L'$.

Note that by the definitions of $\Phi'$, ${\cal H}_{\Phi'}$, and configuration $F$,
the labels in $L'$ that appear in ${\cal H}_{\Phi'}$ are just the same as in $F$.
That is,
\[
    F = L' \cap L({\cal H}_{\Phi'}),
\]
where we use $L({\cal H}_{\Phi'})$ to denote the set of all labels appearing
in ${\cal H}_{\Phi'}$.
So, the event that $s$ and $t$ are separated in ${\cal H}_{\Phi'}$ by $L'$
is the same as the event that $s$ and $t$ are separated in ${\cal H}_{\Phi'}$
by $F$. If $F$ separates $s$ and $t$ in ${\cal H}_{\Phi'}$ (that is, $s$ and $t$
are separated in ${\cal H}_{\Phi'}$ by removing all the edges having labels
in $F$), then we call $F$ a {\em good configuration}. Otherwise we call $F$
a {\em bad configuration}. Therefore, the goal of this section can be equivalently
restated as computing the probability that $F$ is a good configuration.

\begin{lemma}
\label{lm - prob of s and t are separated in Cuvi by L'}
Let $H_{\mu\nu}$ be a shutter in ${\cal H}_{\Phi'}$ for some elements $\mu$ and $\nu$,
and $C_{\mu\nu}^i$ be the $i$-th chain in $H_{\mu\nu}$.
The probability that $s$ and $t$ are separated in $C_{\mu\nu}^i$ by $L'$
is at most
\[
    1 - \left( 1 -\frac{8a}{d} \right)^{4a}.
\]
\end{lemma}
\begin{proof}
Since $H_{\mu\nu} \in {\cal H}_{\Phi'}$, we have $\mu \in \Phi'$ and $\nu \in \Phi'$
by definition. This implies that in $L'$ the number of labels of the form
$(\mu, \cdot)$ is at most $4a$. Likewise, the number of labels of the form
$(\nu, \cdot)$ in $L'$ is also at most $4a$. Therefore, we can upper bound
the probability that $s$ and $t$ are separated in $C_{\mu\nu}^i$ by $L'$,
since all the labels appeared in $C_{\mu\nu}^i$ are of the forms $(\mu, \cdot)$
or $(\nu, \cdot)$, and $L'$ uses bounded number of these labels.

Since $C_{\mu\nu}^i$ is a series of consecutive diamonds
(see Figure \ref{fig - gadget chain}), $s$ and $t$ are separated in $C_{\mu\nu}^i$
if and only if there is a diamond in $C_{\mu\nu}^i$ that at least one of its two top edges
and at least one of its two bottom edges are removed.

Suppose that the random mapping $\sigma_{\mu\nu}^i$ maps $j \in J_\mu$ into
$J_\nu$, that is, $\sigma_{\mu\nu}^i(j) \in J_\nu$.
Then, for the $j$-th diamond of the chain $C_{\mu\nu}^i$,
all the labels of its top edges and bottom edges are included in $L'$.
This is because all the labels in $\{(\mu, j) \colon j \in J_\mu\}$ and
$\{(\nu, j) \colon j \in J_\nu\}$ are in $L'$.
Therefore, if the random mapping $\sigma_{\mu\nu}^i$ maps an element in $J_\mu$ into
$J_\nu$, then $s$ and $t$ are separated in $C_{\mu\nu}^i$ by $L'$.
In other words, $s$ and $t$ are not separated in $C_{\mu\nu}^i$ by $L'$
if and only if $\sigma_{\mu\nu}^i$ maps all $j$'s in $J_\mu$ outside $J_\nu$.
That is,
\begin{eqnarray}
&&\Pr[\mbox{$s$ and $t$ are {\em not} separated by $L'$ in $C_{\mu\nu}^i$}]
\nonumber \\
\label{eqn - prob of s and t are separated in Cuvi}
&=& \Pr[\sigma_{\mu\nu}^i(J_{\mu}) \cap J_{\nu} = \emptyset],
\end{eqnarray}
where $\sigma_{\mu\nu}^i(J_{u}) = \{\sigma_{\mu\nu}^i(j) \colon j \in J_u\}$.

Recall that $\sigma_{\mu\nu}^i$ is a mapping from $[d]$ to $[d]$,
$J_\mu \subseteq [d]$, and $J_\nu \subseteq [d]$. Among all the ways
mapping $J_\mu$ to $[d]$ (there are $d \choose |J_{\mu}|$ such ways),
there are $d - |J_{\nu}| \choose |J_{\mu}|$ ways mapping $J_\mu$ to $[d] \setminus J_\nu$
(i.e., outside $J_\nu$). So, we have
\begin{eqnarray*}
\Pr[\sigma_{\mu\nu}^i(J_{\mu}) \cap J_{\nu} = \emptyset]
&=& \frac{ {d - |J_{\nu}| \choose |J_{\mu}|} } { {d \choose |J_{\mu}|} } \\
&\geq& \frac{ {d - 4a \choose 4a} } { {d \choose 4a} } \\
&=& \underbrace{\frac{d - 4a}{d} \cdot \frac{(d - 4a) - 1}{d - 1} \cdots
  \frac{(d - 4a) - (4a - 1)}{d - (4a - 1)}}_{4a \mathrm{~items}} \\
&\geq& \left( 1 -\frac{8a}{d} \right)^{4a}.
\end{eqnarray*}

Therefore, the probability that $s$ and $t$ are separated in $C_{\mu\nu}^i$
by $L'$ is at most $1 - \left( 1 -\frac{8a}{d} \right)^{4a}$.
We remark that $d$ is strictly greater than $8a$ by our later choice of parameters
(see (\ref{eqn - def of d}) and (\ref{eqn - a = k^delta})).
The lemma follows.
\end{proof}

\begin{lemma}
\label{lm - prob of s and t are separated in Hphi' by L'}
\label{lm - prob that configuration is good}
The probability that $F$ is a good configuration (i.e., the probability
that $s$ and $t$ are separated in ${\cal H}_{\Phi'}$ by $L'$)
is at most
\[
    \left[1 - \left( 1 -\frac{8a}{d} \right)^{4a}\right]^{\frac12 h (3k/4)(3k/4 - 1)}.
\]
\end{lemma}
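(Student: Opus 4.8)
The plan is to exploit the fact that in the subgraph ${\cal H}_{\Phi'}$ all chains share only the common endpoints $s$ and $t$, so that separating $s$ from $t$ amounts to separating them inside every single chain simultaneously. First I would observe that, after the mergings in the construction, every simple $s$-$t$ path in ${\cal H}_{\Phi'}$ must traverse exactly one chain $C_{\mu\nu}^i$ (the internal diamond vertices of distinct chains are distinct, and only $s$ and $t$ are shared). Consequently, removing the edges labeled by $L'$ disconnects $s$ and $t$ in ${\cal H}_{\Phi'}$ if and only if it disconnects $s$ and $t$ in \emph{each} chain $C_{\mu\nu}^i$, ranging over all pairs $\mu<\nu$ in $\Phi'$ and all $i\in[h]$. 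In other words, the event ``$F$ is good'' is precisely the intersection of the per-chain separation events.

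Next I would invoke independence. Since the random permutations $\sigma_{\mu\nu}^i$ attached to distinct chains are drawn independently, and the separation event on a chain $C_{\mu\nu}^i$ (for the fixed set $L'$) is a function of $\sigma_{\mu\nu}^i$ alone, these per-chain events are mutually independent. Therefore the probability of their intersection factorizes:
\[
\Pr[F \text{ is good}] = \prod_{\substack{\mu,\nu\in\Phi'\\ \mu<\nu}}\ \prod_{i=1}^{h} \Pr[\text{$s$ and $t$ are separated in $C_{\mu\nu}^i$ by $L'$}].
\]

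Then I would bound each factor and count the factors. By Lemma~\ref{lm - prob of s and t are separated in Cuvi by L'}, every factor is at most $1-\left(1-\frac{8a}{d}\right)^{4a}$, a quantity lying in $[0,1)$. The number of factors equals $h\binom{|\Phi'|}{2}$, and by (\ref{eqn - |Phi'| >= 3k/4}) we have $|\Phi'|\ge \tfrac34 k$, so by monotonicity of $x(x-1)$ we get $h\binom{|\Phi'|}{2}\ge \tfrac12 h(3k/4)(3k/4-1)$. Substituting the per-factor bound yields
\[
\Pr[F \text{ is good}] \le \left[1-\left(1-\frac{8a}{d}\right)^{4a}\right]^{h\binom{|\Phi'|}{2}},
\]
and since the base is strictly less than $1$, enlarging the exponent from $\tfrac12 h(3k/4)(3k/4-1)$ up to $h\binom{|\Phi'|}{2}$ only decreases the right-hand side; hence the claimed bound follows.

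The individual steps are short, so there is no single heavy calculation; the one point that must be argued with care is the characterization in the first paragraph --- that separation in ${\cal H}_{\Phi'}$ is exactly the conjunction of separations over all its chains --- together with the observation that the per-chain events depend on disjoint, independent collections of random permutations, so that the factorization is legitimate. Once those two facts are established, the per-factor bound, the counting of chains, and the monotonicity of $x\mapsto x^N$ on $[0,1)$ close the argument. I would also remark that restricting attention to ${\cal H}_{\Phi'}$ rather than all of $G$ is essential here, because the per-chain estimate of Lemma~\ref{lm - prob of s and t are separated in Cuvi by L'} crucially uses that both $\mu$ and $\nu$ are light (so that at most $4a$ labels of each form appear in $L'$).
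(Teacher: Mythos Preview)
Your proof is correct and follows essentially the same approach as the paper: decompose the event ``$F$ is good'' into the conjunction of per-chain separation events, invoke independence of the permutations $\sigma_{\mu\nu}^i$ to factorize, apply Lemma~\ref{lm - prob of s and t are separated in Cuvi by L'} to each factor, and then use $|\Phi'|\ge \tfrac34 k$ together with monotonicity of $x\mapsto x^N$ on $[0,1)$. Your write-up is in fact slightly more careful than the paper's, since you explicitly justify both the structural reduction (every $s$-$t$ path in ${\cal H}_{\Phi'}$ lies in a single chain) and the independence step, which the paper leaves implicit.
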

\begin{proof}
Let $H_{\mu\nu}$ be any shutter in ${\cal H}_{\Phi'}$. Since $H_{\mu\nu}$ contains
$h$ chains $C_{\mu\nu}^1$, $\ldots$, $C_{\mu\nu}^h$,
by Lemma \ref{lm - prob of s and t are separated in Cuvi by L'}, we have
\begin{eqnarray}
&& \Pr[\mbox{$s$ and $t$ are separated in $H_{\mu\nu}$ by $L'$}]
\nonumber \\
&=& \Pr[\forall 1 \leq i \leq h \colon
    s \mbox{ and } t \mbox{ are separated in } C_{\mu\nu}^i \mbox{ by } L']
    \nonumber \\
\label{eqn - prob of s and t are separated in Huv}
&\leq& \left[ 1 - \left( 1 -\frac{8a}{d} \right)^{4a} \right]^h.
\end{eqnarray}

So, for ${\cal H}_{\Phi'}$ we have
\begin{eqnarray}
&& \Pr[\text{$s$ and $t$ are separated in ${\cal H}_{\Phi'}$ by $L'$}]
    \nonumber \\
&=& \Pr[\forall \mu,\nu\in \Phi' \text{ s.t. } \mu < \nu,
    \text{$s$ and $t$ are separated in $H_{\mu\nu}$}]
    \nonumber \\
&\stackrel[\text{(\ref{eqn - prob of s and t are separated in Huv})}]{}{\leq}&
    \prod_{\mu,\nu\in \Phi'} \left[ 1 - \left( 1 -\frac{8a}{d} \right)^{4a} \right]^h
    \nonumber \\
\label{eqn - prob. s and t are separated in H_U'}
&\stackrel[\text{(\ref{eqn - |Phi'| >= 3k/4})}]{}{\leq}&
    \left[1 - \left( 1 -\frac{8a}{d} \right)^{4a} \right]^{\frac12 h (3k/4)(3k/4 - 1)}.
\end{eqnarray}
\end{proof}

%
%

\subsection{Proof of the Technical Lemma \ref{lm - Integral OPT of LP2 is large}}
We first state the overall strategy to prove
Lemma \ref{lm - Integral OPT of LP2 is large}.
Given a solution of size $c$ to the instance $\cal I$,
we can figure out its corresponding configuration.
Since there are many solutions of size $c$, there are many different
configurations. (Different solutions may lead to the same configuration.)
All these configurations are about instance $\cal I$.
If all these configurations are bad configurations, then any solution
of size $c$ cannot separate $s$ and $t$. So, a feasible solution
to the instance has to have size strictly larger than $c$.

Since $\cal I$ is a random instance, we have to compute
the probability that there exists a good configuration for $\cal I$.
If this probability is less than one, then with non-zero probability,
the random instance $\cal I$ has no good configuration.
So, there exists a fixed instance $\hat{\cal I}$ (i.e., a sample) of
the random instance $\cal I$, for which all its configurations
are bad. Consequently, any feasible solution to instance $\hat{\cal I}$
has size $> c$.

In order to compute the probability that there exists a good configuration
for the random instance $\cal I$, let us first count for the number of
all configurations of $\cal I$.

\begin{lemma}
\label{lm - number of configurations}
Given the solution size $c$, the number of solution configurations of
instance $\cal I$ is at most
\[
    (4a+1)^k d^{(4a+1)k}.
\]
\end{lemma}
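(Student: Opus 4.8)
The plan is to bound the number of configurations by a direct encoding (product) argument, since a configuration is nothing more than a constrained choice of labels, one independent ``slice'' per ground element.

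First I would observe that a configuration $F$ is completely recovered from the data of its element-wise slices
\[
    S_\mu := \{ j \in [d] \colon (\mu, j) \in F \}, \qquad \mu \in \Phi,
\]
because $F = \{(\mu, j) \colon \mu \in \Phi,\ j \in S_\mu\}$. Thus the map $F \mapsto (S_1, \ldots, S_k)$ is injective, and it suffices to count the tuples $(S_\mu)_{\mu \in \Phi}$ that can arise. The crucial point is that every slice is small: for a light element $\mu \in \Phi'$ we have $S_\mu = J_\mu$ with $|J_\mu| \le 4a$ by the definition (\ref{eqn - Phi'}) of $\Phi'$, while for a heavy element $\mu \notin \Phi'$ the configuration contains no label of the form $(\mu, \cdot)$ at all, so $S_\mu = \emptyset$. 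Hence $|S_\mu| \le 4a$ holds \emph{uniformly} over all $\mu \in \Phi$, and this single uniform constraint is exactly what lets the count factor across elements.

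Next I would count. Since the $k$ slices may be chosen independently (we only need an upper bound, so overcounting tuples that do not actually arise from any size-$c$ solution is harmless), the number of configurations is at most $N^k$, where $N$ is the number of subsets of $[d]$ of size at most $4a$, that is,
\[
    N = \sum_{i=0}^{\lfloor 4a \rfloor} \binom{d}{i}.
\]
To bound $N$ I would note that the sum has at most $4a+1$ terms and use $\binom{d}{i} \le d^i \le d^{4a+1}$ for each $i \le \lfloor 4a \rfloor$, giving $N \le (4a+1)\, d^{4a+1}$. Raising this to the $k$-th power yields the claimed bound $(4a+1)^k d^{(4a+1)k}$. (In fact $\binom{d}{i} \le d^{4a}$ already holds for $i \le \lfloor 4a\rfloor$, so one even obtains the slightly stronger $(4a+1)^k d^{4ak}$, which already implies the statement.)

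I do not expect a genuine obstacle here, as the lemma is a routine counting estimate; the points requiring care are conceptual rather than computational. One must count \emph{configurations} and not \emph{solutions}, since many size-$c$ sets $L'$ collapse to the same $F$, so an upper bound on the admissible tuples $(S_\mu)$ is the correct object. One must also verify the uniform size bound $|S_\mu| \le 4a$ \emph{including} the empty heavy slices, because without uniformity the product over the $k$ elements would not be valid. A final minor bookkeeping point is that $a = c/k$, hence $4a$, need not be an integer, so the inner sum runs only up to $\lfloor 4a \rfloor$; this only tightens the estimates above.
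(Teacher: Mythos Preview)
Your proof is correct and follows essentially the same approach as the paper: both count configurations by bounding, for each of the $k$ ground elements independently, the number of admissible label slices of size at most $4a$, and then take the product. Your encoding is in fact slightly more streamlined than the paper's, which first chooses the set of elements with nonempty slice (a factor $2^k$), then counts nonempty slices via the identity $\sum_{i=1}^{4a}\binom{d}{i}\le \frac{4a+1}{2}\binom{d}{4a+1}$ from \cite{GKP94}; by allowing $S_\mu=\emptyset$ directly and using the crude bound $\binom{d}{i}\le d^{4a+1}$, you absorb that extra factor and avoid the external reference.
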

\begin{proof}
Since the element set for a configuration is a subset of the ground set $\Phi$,
and there are $k$ elements in $\Phi$, the number of all possible element
sets for configurations is at most
\begin{equation}
\label{eqn - upper bound on the number of element sets}
    2^k.
\end{equation}

Fix an element set of size $r$ (e.g., the set $\Phi_F$ in Figure \ref{fig - configuration}),
and consider an element $\mu$ in this set. There are at most $4a$ labels related
to $\mu$ in a configuration (see a line in Figure \ref{fig - configuration}).
The possibilities that at most $4a$ such labels appear is
${d \choose 1} + {d \choose 2} + \cdots + {d \choose 4a}$, since there are $d$ labels
$\{(u, j) \colon 1 \leq j \leq d\}$ related to $\mu$ in total in $L$.
Since there are $r$ elements in the element set, the total possibilities
of configurations for this fixed element set is
\begin{equation}
\label{eqn - [...]^r}
    \left[{d \choose 1} + {d \choose 2} + \cdots + {d \choose 4a}\right]^r.
\end{equation}

By \cite[Equation (5.18)]{GKP94}, we have
${d \choose 1} + {d \choose 2} + \cdots + {d \choose 4a}
\leq
\frac{4a+1}{2} {d \choose {4a+1}}$.
So, the value of (\ref{eqn - [...]^r}) is at most
\begin{equation}
\label{eqn - upper bound on the number of cfgs for fixed element set}
\left[\frac{4a+1}{2} {d \choose {4a+1}}\right]^r
\leq
\left[\frac{4a+1}{2} {d \choose {4a+1}}\right]^k
\leq
\left[ \frac12 (4a+1) d^{4a+1} \right]^k.
\end{equation}

By (\ref{eqn - upper bound on the number of element sets})
and (\ref{eqn - upper bound on the number of cfgs for fixed element set}),
the total number of configurations of instance $\cal I$ is at most
$(4a+1)^k d^{(4a+1)k}$, proving the lemma.
\end{proof}

\bigskip

{\bf The probability that there exists a good configuration for instance $\cal I$.}
Define
\begin{equation}
\label{eqn - def of z}
    z := (4a+1)^k d^{(4a+1)k}
        \left[1 - \left( 1 -\frac{8a}{d} \right)^{4a}\right]^{\frac12 h (3k/4)(3k/4 - 1)}.
\end{equation}
Then by Lemma \ref{lm - prob that configuration is good} and
Lemma \ref{lm - number of configurations},
$z$ is an upper bound of the probability that there exists a good configuration
for instance $\cal I$. We can choose the values of $d$, $h$ and $c$
(recall that $a = c/k$) so that $z < 1$ (see Lemma \ref{lm - z < 1}), and hence
the technical Lemma \ref{lm - Integral OPT of LP2 is large} can be proved.

\bigskip

{\bf Settling the values of $d$, $h$, and $c$.}
Now we settle the values of the parameters. Let $\epsilon$ be any small constant
in $(0, 1/3)$, and $\delta>0$ and $\beta>0$ are two constants whose values only
depend on $\epsilon$. We define
\begin{eqnarray}
\label{eqn - def of d}
d &:=& 32{k^{2\delta}}, \\
\label{eqn - def of h}
h &:=& k^{\beta}, \\
\label{eqn - def of c}
c &:=& k^{1+\delta}.
\end{eqnarray}
The values of $\delta$ and $\beta$ will be given in Section \ref{sec - z < 1}.

All we have done until now will be put together in the following
Lemma \ref{lm - z < 1} and Lemma \ref{lm - c = Omega(kn^(1/3-epsilon))}.
Their proofs are deferred to Section \ref{sec - z < 1}.

\begin{lemma}
\label{lm - z < 1}
For any positive constants $\beta$ and $\delta$, as long as
\begin{equation}
\label{eqn - condition that beta > delta-1}
    \beta > \delta - 1,
\end{equation}
we will have
\[
    z < 1
\]
for large enough $k$, that is, for any $k \geq k_0$,
where $k_0$ is a constant depending only on $\beta$ and $\delta$.
\end{lemma}

\begin{lemma}
\label{lm - c = Omega(kn^(1/3-epsilon))}
Let $\epsilon$ be any small constant in $(0, 1/3)$. There exist
appropriate positive values of $\delta$ and $\beta$, which only depend on $\epsilon$,
such that
\[
    c = \Omega(k n^{1/3 - \epsilon}).
\]
\end{lemma}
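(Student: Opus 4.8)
The plan is to reduce the asymptotic claim to a single inequality between exponents of $k$, and then to exhibit positive constants $\delta, \beta$ solving it together with the condition $\beta > \delta - 1$ inherited from Lemma~\ref{lm - z < 1}. First I would substitute the chosen parameter values. From (\ref{eqn - def of c}) and (\ref{eqn - definition of a}) we have $a = c/k = k^{\delta}$, and combining (\ref{eqn - def of d}), (\ref{eqn - def of h}) with (\ref{eqn - n = Theta(k^2 dh)}) gives
\[
    n = \Theta(k^{2} d h) = \Theta\!\left(k^{2 + 2\delta + \beta}\right).
\]
Writing $N := 2 + 2\delta + \beta$ and $p := 1/3 - \epsilon$, the target quantity is $k\, n^{1/3-\epsilon} = \Theta(k^{1 + Np})$, while $c = k^{1+\delta}$. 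Since both sides are fixed powers of $k$, the claim $c = \Omega(k\, n^{1/3-\epsilon})$ is equivalent to the exponent inequality $1 + \delta \ge 1 + Np$, that is,
\[
    \delta \;\ge\; (2 + 2\delta + \beta)\left(\tfrac13 - \epsilon\right).
\]

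Thus the lemma reduces to showing that positive constants $\delta, \beta$ (depending only on $\epsilon$) exist satisfying this inequality and, at the same time, the constraint $\beta > \delta - 1$ that Lemma~\ref{lm - z < 1} needs to force $z < 1$. I would rearrange the exponent inequality into $\delta(1 - 2p) \ge (2 + \beta)p$; since $1 - 2p = 1/3 + 2\epsilon > 0$, this yields the clean lower bound $\delta \ge (2+\beta)p/(1-2p)$. Taking this at equality (the tightest choice, which also keeps $n$ smallest) and substituting $\delta = (2+\beta)p/(1-2p)$ into $\beta > \delta - 1$, a short computation collapses the two requirements into a single bound $\beta\,(1 - 3p) > 4p - 1$. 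With $p = 1/3 - \epsilon$ one has $1 - 3p = 3\epsilon$ and $4p - 1 = 1/3 - 4\epsilon$, so the condition becomes $3\epsilon\,\beta > 1/3 - 4\epsilon$, i.e.\ $\beta > \frac{1}{9\epsilon} - \frac{4}{3}$, which is satisfiable for every $\epsilon \in (0,1/3)$ by a large enough constant $\beta$; the matching $\delta = (2+\beta)p/(1-2p)$ is then also a positive constant.

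The step I expect to be delicate -- and which is really the heart of why the exponent is $1/3 - \epsilon$ rather than $1/3$ -- is the joint feasibility of the two constraints. Setting $\epsilon = 0$ in the exponent inequality forces $3\delta \ge 2 + 2\delta + \beta$, i.e.\ $\delta - \beta \ge 2$, in direct contradiction with $\delta - \beta < 1$ coming from $\beta > \delta - 1$; hence no admissible constants exist at $\epsilon = 0$. For positive $\epsilon$ the coefficient $1 - 3p = 3\epsilon$ opens a nonempty (but shrinking) feasible window, at the price of forcing $\beta = \Theta(1/\epsilon)$. I would therefore conclude by fixing, for the given $\epsilon$, any $\beta > \frac{1}{9\epsilon} - \frac{4}{3}$ and $\delta := (2+\beta)p/(1-2p)$, checking that $\delta > 0$ and $\beta > \delta - 1$, and reading off $c = k^{1+\delta} = \Omega(k\, n^{1/3-\epsilon})$; what remains is only the routine arithmetic verification of these inequalities.
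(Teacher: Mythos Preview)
Your argument is correct and follows essentially the same route as the paper: both reduce the claim to the exponent inequality $\delta \ge (2+2\delta+\beta)(\tfrac13-\epsilon)$ together with the constraint $\beta > \delta - 1$, and then exhibit feasible constants. The only cosmetic difference is the parameterization---the paper fixes $\beta := \delta - 1 + \tfrac{\epsilon}{2}$ and then takes $\delta \ge \tfrac{1}{3\epsilon}$, whereas you fix $\delta := (2+\beta)p/(1-2p)$ on the tight boundary and then take $\beta > \tfrac{1}{9\epsilon} - \tfrac{4}{3}$; your added remark that the system becomes infeasible at $\epsilon = 0$ is a nice explanation of why the exponent stops short of $1/3$, which the paper does not spell out.
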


With the help of Lemma \ref{lm - z < 1} and Lemma \ref{lm - c = Omega(kn^(1/3-epsilon))},
it is easy to prove Lemma \ref{lm - Integral OPT of LP2 is large}.

\bigskip

\noindent
{\bf Lemma \ref{lm - Integral OPT of LP2 is large}.}
(restated)
\begin{em}
For any small constant $\epsilon > 0$, there exists a constant $k_0$
which depends only on $\epsilon$, such that for any integer $k \geq k_0$,
there exists a {\sf Label $s$-$t$ Cut} instance $\hat{\cal I}$
whose minimum label cut is of size $\Omega(kn^{1/3-\epsilon})$.
\end{em}
\begin{proof}[Proof of Lemma \ref{lm - Integral OPT of LP2 is large}]
By Lemma \ref{lm - z < 1}, the probability that there exists a good configuration
for the random instance $\cal I$ is less than one (for large enough $k$).
So, with probability larger than zero, the random instance $\cal I$ has
no good configuration. Therefore, there exists a fixed instance $\hat{\cal I}$
of the random instance $\cal I$, for which all its configurations are bad.
That is, any feasible solution to instance $\hat{\cal I}$ has size $> c$.
By Lemma \ref{lm - c = Omega(kn^(1/3-epsilon))}, we know that
$c = \Omega(k n^{1/3 - \epsilon})$ for any small constant $\epsilon > 0$.
The lemma follows.
\end{proof}

\subsection{Forcing $z < 1$}
\label{sec - z < 1}

\noindent
{\bf Lemma \ref{lm - z < 1}.} (restated)
\begin{em}
For any positive constants $\beta$ and $\delta$, as long as
$\beta > \delta - 1$,
we will have $z < 1$ for large enough $k$, that is, for any $k \geq k_0$,
where $k_0$ is a constant depending only on $\beta$ and $\delta$.
\end{em}
\begin{proof}[Proof of Lemma \ref{lm - z < 1}]
By (\ref{eqn - definition of a}) and (\ref{eqn - def of c}), we have
\begin{equation}
\label{eqn - a = k^delta}
    a = \frac{c}{k} = k^{\delta}.
\end{equation}
The upper bound of configuration number in Lemma \ref{lm - number of configurations}
is
\[
    (4a+1)^k d^{(4a+1)k} = \exp(k\ln(4a+1) + (4a+1)k\ln d).
\]
By (\ref{eqn - a = k^delta}), we know $k\ln(4a+1) = O(\delta k \ln k)$.
By (\ref{eqn - def of d}) and (\ref{eqn - a = k^delta}), we know
$(4a+1)k\ln d = O(\delta k^{1+\delta} \ln k)$.
So, we have
\[
    (4a+1)^k d^{(4a+1)k} \leq \exp(c' \delta k^{1+\delta} \ln k))
\]
for some constant $c' > 0$.

By our choices of $d$ and $a$ (see (\ref{eqn - def of d}) and
(\ref{eqn - a = k^delta})), we have
\begin{equation}
\label{eqn - ... >= 1/4}
    \left( 1 -\frac{8a}{d} \right)^{4a}
    = \left[\left( 1 -\frac{8a}{d} \right)^{\frac{d}{8a}}\right]^{\frac{32a^2}{d}}
    \geq \left(\frac14\right)^{\frac{32a^2}{d}}
    = \frac14,
\end{equation}
where the inequality is due to that $d \geq 2\cdot 8a$ and hence
$\left( 1 -\frac{8a}{d} \right)^{\frac{d}{8a}} \geq \frac14$,
and the last equality is due to $d = 32a^2$
(This is why we set $d$ as that in (\ref{eqn - def of d})).
Therefore, the probability upper bound in
Lemma \ref{lm - prob that configuration is good} is
\[
    \left[1 - \left( 1 -\frac{8a}{d} \right)^{4a}\right]^{\frac12 h (3k/4)(3k/4 - 1)}
    \stackrel[\text{(\ref{eqn - ... >= 1/4})}]{}{\leq}
        \left(\frac34\right)^{\frac12 h (3k/4)(3k/4 - 1)}
    \stackrel[\text{(\ref{eqn - def of h})}]{}{\leq}
        \exp(-c''k^{2+\beta})
\]
for some constant $c'' > 0$.

Therefore, by (\ref{eqn - def of z}), the definition of $z$, we have
\begin{equation}
\label{eqn - z <= 2^...}
    z \leq \exp(c'\delta k^{1+\delta}\ln k - c''k^{2+\beta}).
\end{equation}

Let us focus on the exponent of the right hand side of (\ref{eqn - z <= 2^...}).
In fact, as long as
\[
    2+\beta > 1+\delta,
\]
we will have
\begin{equation}
\label{eqn - how large k should be}
    c''k^{2+\beta} > c'\delta k^{1+\delta}\ln k
\end{equation}
for large enough $k$.
Consequently, the exponent of the right hand side
of (\ref{eqn - z <= 2^...}) will be negative and we really will have $z < 1$.

One can verify that $k$ can be {\em any} integer larger than
a sufficiently large constant, say $k_0$, depending on $\beta$ and $\delta$.
By the following proof of Lemma \ref{lm - c = Omega(kn^(1/3-epsilon))},
$\beta$ and $\delta$ are two constants depending only on $\epsilon$.
So, $k_0$ is a constant depending only on $\epsilon$, too.
\end{proof}

\noindent
{\bf Lemma \ref{lm - c = Omega(kn^(1/3-epsilon))}.} (restated)
\begin{em}
Let $\epsilon$ be any small constant in $(0, 1/3)$. There exist
appropriate positive values of $\delta$ and $\beta$, which only depend on $\epsilon$,
such that $c = \Omega(k n^{1/3 - \epsilon})$.
\end{em}
\begin{proof}[Proof of Lemma \ref{lm - c = Omega(kn^(1/3-epsilon))}]
Recall that $n$ is the number of vertices in graph $G$.
By (\ref{eqn - n = Theta(k^2 dh)}), (\ref{eqn - q = kd}), and (\ref{eqn - def of h}),
we know $n = \Theta(k^{2} d h) = \Theta(k^{2+2\delta+\beta})$.
So, we have
\begin{equation}
\label{eqn - k^delta = Theta(n^(...))}
k^{\delta} = \Theta(n^{\frac{\delta}{2+2\delta+\beta}}).
\end{equation}

Recall from (\ref{eqn - def of c}) that $c = k \cdot k^{\delta}$.
Our goal is to make the exponent
\[
    \frac{\delta}{2+2\delta+\beta}
\]
in (\ref{eqn - k^delta = Theta(n^(...))}) as large as possible to get
a large enough integrality gap. So, $\beta$ should be as small as possible,
meanwhile it should satisfy (\ref{eqn - condition that beta > delta-1}).
Therefore, for any small constant $\epsilon \in (0, 1/3)$, we set
\[
    \beta := \delta-1+\frac{\epsilon}{2}
\]
and choose $\delta$ to be {\em any} constant satisfying
\[
    \delta \geq \frac{1}{3\epsilon}.
\]
Note that there are multiple choices of $\delta$ and $\beta$.

With the values of $\delta$ and $\beta$ chosen as above,
we have
\begin{equation}
\label{eqn - exponent is larger than 1/3-epsilon}
\frac{\delta}{2\delta+\beta+2} > \frac13 - \epsilon.
\end{equation}
By (\ref{eqn - def of c}), (\ref{eqn - k^delta = Theta(n^(...))}),
and (\ref{eqn - exponent is larger than 1/3-epsilon}), we have
\begin{equation}
\label{eqn - lower bound on c}
    c = \Omega(kn^{1/3-\epsilon}).
\end{equation}
This gives the lemma.
\end{proof}

\section{Conclusions}
\label{sec - conclusion and discussion}
In this paper, we prove that two natural linear program relaxations
for {\sf Label $s$-$t$ Cut} ((\ref{LP1 - weak LP for label cut}) and
(\ref{LP2 - strong LP for label cut}) in the paper) have large integrality gaps
$\Omega(m)$ and $\Omega(m^{1/3-\epsilon})$. These are two theoretical lower bound
results.

For the {\sf Label $s$-$t$ Cut} instances we construct
in Section \ref{sec - LP1} and Section \ref{sec - construction of the instance},
it is easy to see that we can make the graphs in the instances directed
by orienting every edge from $s$ to $t$. The analyses of integrality gaps
of (\ref{LP1 - weak LP for label cut}) and (\ref{LP2 - strong LP for label cut})
still go through for the resulting directed graphs. So, the integrality gap
$\Omega(m)$ of (\ref{LP1 - weak LP for label cut}) and the integrality gap
$\Omega(m^{1/3-\epsilon})$ of (\ref{LP2 - strong LP for label cut})
naturally extend to the directed {\sf Label $s$-$t$ Cut} problem.

Until now, we know that {\sf Label $s$-$t$ Cut} has high approximation hardness
factor \cite{ZCTZ11} and its two natural LP-relaxations have large integrality gaps.
A challenging problem for {\sf Label $s$-$t$ Cut}
is to improve its approximation hardness or approximation ratio further.
Either direction seems not easy. Experimental results are also welcome
for {\sf Label $s$-$t$ Cut}.

A closely related challenging problem is to determine the exact complexity
of {\sf Global Label Cut}. Until now, we do not know whether it is in P or
NP-hard.

\section*{Acknowledgements}
Peng Zhang is supported by
the National Natural Science Foundation of China (61672323),
the Natural Science Foundation of Shandong Province (ZR2016AM28),
and the Fundamental Research Funds of Shandong University (2017JC043).

\appendix

\section*{Appendix}

\section{Origins of {\sf Label $s$-$t$ Cut}}
\label{sec - origins of label s-t cut}
The {\sf Label $s$-$t$ Cut} problem came from the work of Jha et al. \cite{JSW02},
Sheyner et al. \cite{SHJ+02}, and Sheyner et al. \cite{SW04}
in computer security, in particular on intrusion detection and on generation
and analysis of the so-called ``attack graphs''.
In this application, an attack graph $G$ has nodes
representing various states, and directed edges representing state
transitions and are labeled by possible ``atomic attacks''.
A pair of special nodes $s$ and $t$ are also given representing
the initial state and the success state (for the intruder).
If the intruder's current state becomes $t$, it means that the intruder
has successfully intruded the system.
The defender's task is to avoid the intrusion by disabling some atomic attacks
of the intruder. To disable an atomic attack incurs some cost
(a unit or a weighted cost). Then the computational task is to find a subset
of atomic attacks of minimum cardinality (or of minimum total weight),
such that the removal of all edges labeled by these atomic attacks disconnects
$s$ and $t$. This is precisely the (directed) {\sf Label $s$-$t$ Cut} problem.

Very interestingly, the {\sf Label $s$-$t$ Cut} problem independently
arose in the research of network survivability by Coudert et al. \cite{CDP+07}.
In a virtual network
(e.g., IP/WDM and MPLS networks \cite{CDP+07}, VPN (virtual private network), etc),
what lie between the network nodes are logical connections, which are realized
via the underlying physical paths consisting of physical links.
In other words, a virtual network is a logical network which
is built on the underlying physical communication network.
See Figure \ref{fig - logical network and physical network} for an example.
(Figure \ref{fig - logical network and physical network} originally appeared
in \cite{CDP+07}.)
If some physical link (i.e., edge in the underlying network) fails,
then all the logical connections (i.e., edges in the virtual network) that use
this physical link fail. We can identify each physical link with a distinct label.
In this way, we get a labeled virtual network, in which a logical connection $e$
has a label $\ell$ if and only if $e$ is realized using the physical link $\ell$.
The key point here is that, different logical connections may share the same label,
and removing one label may destroy several logical connections at the same time.
Then the {\sf Label $s$-$t$ Cut} problem gives tight lower bound
on the number of failures on physical links that can disconnect
a given node pair $(s, t)$ in a virtual network.

\begin{figure}
\begin{center}
\includegraphics*[width=0.7\textwidth]{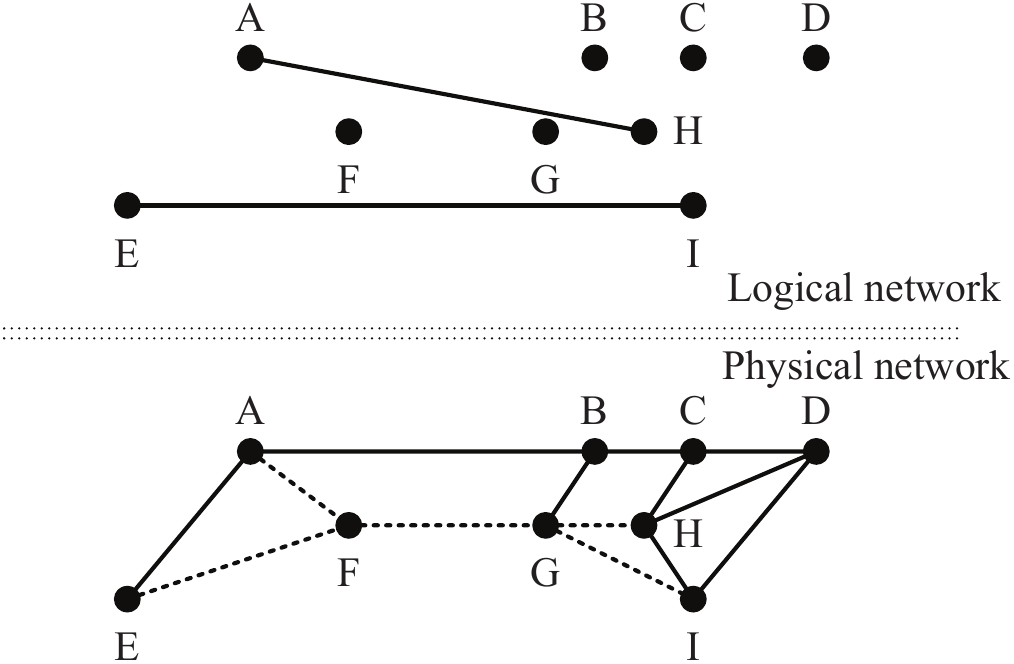}
\end{center}
\caption{A logical network and its underlying physical network.
Only logical connections AH and EI are drawn in the logical network.
AH is realized via the physical path AFGH, while EI is realized via
the physical path EFGI. Note that AH and EI share the same physical link
FG. \label{fig - logical network and physical network}}
\end{figure}

For the application in virtual networks, there is a subtle thing we need
to clarify. Since a logical connection is realized via a physical path,
and a physical path may consist of several physical links, a logical connection
may have more than one labels in general. That is, in the {\sf Label $s$-$t$ Cut}
instance we get, there may be more than one labels on an edge.
However, we can reasonably prescribe that once one of these labels is removed,
the logical connection is destroyed.


\bibliographystyle{plain}
\bibliography{lcbib}

\end{document}